\documentclass[journal]{IEEEtran}
\usepackage{cite}
\usepackage{url}
\usepackage{amsmath}
\interdisplaylinepenalty=2500
\allowdisplaybreaks
\usepackage{amssymb}
\usepackage{enumitem}
\usepackage{bm}
\usepackage{algorithm,algorithmic}
\usepackage{mathtools}
\mathtoolsset{showonlyrefs}
\usepackage[caption=false,font=footnotesize,subrefformat=parens,labelformat=parens]{subfig}
\usepackage{color}\definecolor{RED}{rgb}{1,0,0}\definecolor{BLUE}{rgb}{0,0,1}
\DeclareMathOperator*{\minimize}{minimize}

\def\rainfty{\rightarrow\infty}
\def\ra{\rightarrow}

\def\argmin{\text{argmin}}

\def\ln{\text{ln}}

\newcommand{\mathletter}[1]{%
	\expandafter\newcommand\csname b#1\endcsname{\mathbb #1}
	\expandafter\newcommand\csname c#1\endcsname{\mathcal #1}
	\expandafter\newcommand\csname f#1\endcsname{\mathfrak #1}
	\expandafter\newcommand\csname til#1\endcsname{\widetilde #1}
	\expandafter\newcommand\csname ha#1\endcsname{\widehat #1}
	\expandafter\newcommand\csname bf#1\endcsname{\bf #1}
}%
\def\mathletters#1{\mathlettersB #1,,}
\def\mathlettersB#1,{\ifx,#1,\else\mathletter #1\expandafter\mathlettersB\fi}
\mathletters{A,B,C,D,E,F,G,H,I,J,K,L,M,N,O,P,Q,R,S,T,U,V,W,X,Y,Z}

\def \qed {\hfill \vrule height6pt width 6pt depth 0pt}
\def\bea{\begin{equation}\begin{aligned}}
\def\ena{\end{aligned}\end{equation}}
\def\bee{\begin{equation}}
\def\ene{\end{equation}}

\renewcommand{\vec}[1]{\mathbf{#1}}
\providecommand{\diff}{}

\newtheorem{theo}{Theorem}
\newtheorem{lemma}{Lemma}
\newtheorem{assum}{Assumption}

\newtheorem{remark}{Remark}

\newtheorem{example}{Example}
\newenvironment{proof}{\begin{IEEEproof}}{\end{IEEEproof}}

\def\T{\mathsf{T}}

\def\bone{{\mathbf{1}}}
\def\bzero{{\mathbf{0}}}

\begin{document}

\title{AsySPA: An Exact Asynchronous Algorithm for Convex Optimization Over Digraphs}
\author{Jiaqi~Zhang,  Keyou~You, \IEEEmembership{Senior Member,~IEEE} 
	\thanks{*This work was  supported by the National Natural Science Foundation of China under Grant 61722308 ({\em Corresponding author: Keyou You}).}
	\thanks{The authors are with the Department of Automation, and BNRist, Tsinghua University, Beijing 100084, China. E-mail: zjq16@mails.tsinghua.edu.cn, youky@tsinghua.edu.cn.}
}

\maketitle

\IEEEpeerreviewmaketitle

\begin{abstract}
	This paper proposes a novel exact distributed asynchronous subgradient-push algorithm (AsySPA) to solve an additive cost  optimization problem over directed graphs where each node only has access to a local convex function and updates asynchronously with an arbitrary rate. Specifically,  each node of a strongly connected digraph does not wait for updates from other nodes but simply starts a new update within any bounded time interval by using  local information available from its in-neighbors. ``Exact" means that every node of the AsySPA can asymptotically converge to the same optimal solution, even under different update rates among nodes and bounded communication delays. To address uneven update rates, we design a simple mechanism to adaptively adjust stepsizes per update in each node, which is substantially different from the existing works.  Then, we construct a delay-free augmented system to address asynchrony and delays, and study its convergence by proposing a generalized subgradient algorithm, which clearly has its own significance and helps us to explicitly evaluate the convergence rate of the AsySPA. Finally, we demonstrate advantages of the AsySPA in both theory and simulation.
\end{abstract}

\begin{IEEEkeywords}
	Distributed optimization, asynchronous algorithms, directed graphs,  adaptive stepsize, Asynchronous subgradient-push algorithm (AsySPA).
\end{IEEEkeywords}

\section{Introduction}\label{sec1}
Our objective is to distributedly solve an additive cost optimization problem  over a directed graph (digraph) where each summand is privately preserved by a computing node. This problem is usually referred to as the {\em distributed optimization problem} (DOP) and has attracted considerable attention in the past decade, see e.g. \cite{nedic2017network} and references therein.

To solve DOPs, there are two types of  distributed algorithms: \emph{synchronous} and \emph{asynchronous} algorithms. The distinct feature between them is that nodes in asynchronous algorithms do not wait for updates from other nodes but simply compute updates using its currently available information. This allows them to complete updates much faster than the synchronous ones and eliminates the costly synchronization penalty, which is especially outstanding for the case with a large scale number of heterogeneous  nodes \cite{hannah2018abcd}. However, asynchrony also brings many challenges in the design and analysis of effective algorithms. This paper proposes a novel exact asynchronous subgradient-push algorithm (AsySPA) to solve the DOP over digraphs.

\subsection{Synchronous versus asynchronous algorithms}\label{sec_1b}
We use Fig. \ref{fig3} to elaborate the striking differences between synchronous and asynchronous algorithms by taking the digraph in Fig. \ref{fig4} as an example.  In Fig. \ref{fig3},  the line with an arrow represents the communication direction  between nodes, which also allows possible transmission delays. In a synchronous algorithm  (see Fig. \subref*{fig3a:exp2}),  {\em all} the nodes start to update at the same time $t(k)$ where $k$ denotes the number of updates of the network and should be {\em known} to all nodes.  In an asynchronous algorithm (see Fig. \subref*{fig3b:exp2}), each node computes update {\em independently}. \diff{Particularly, it can start a new update immediately after finishing the current one by possibly using multiple receptions from one neighbor. }

\begin{figure}[!t]
	\centering
	\includegraphics[width=0.5\linewidth]{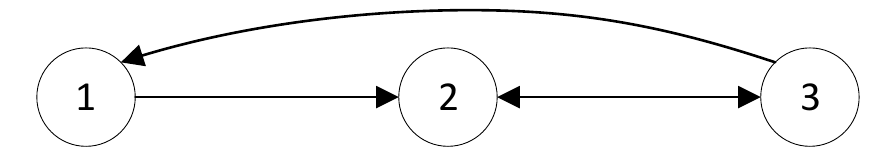}\\
	\caption{A digraph with $3$ computing nodes.}
	\label{fig4}
\end{figure}

\begin{figure}[!t]
	\centering
	\subfloat[A synchronous algorithm]{\label{fig3a:exp2}{\includegraphics[width=0.8\linewidth]{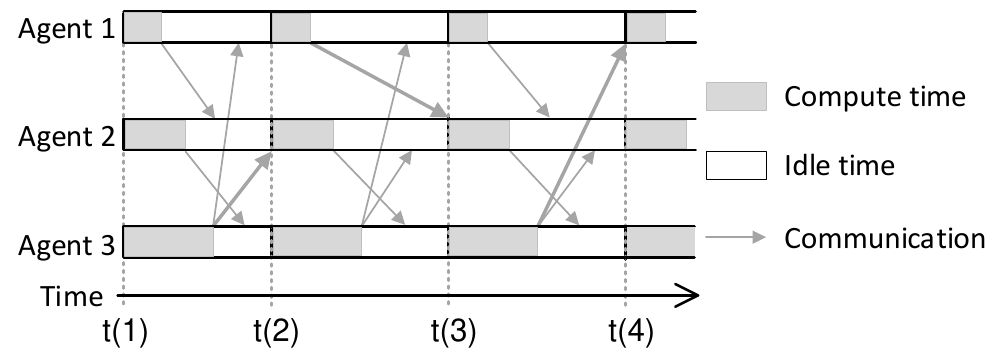}}}\\
	\subfloat[An asynchronous algorithm]{\label{fig3b:exp2}{\includegraphics[width=0.8\linewidth]{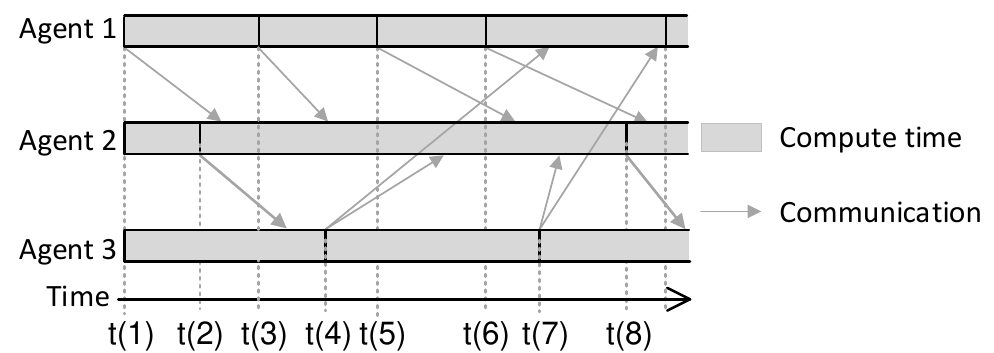}}}
	\caption{Synchronous versus asynchronous algorithms. In synchronous algorithms, all nodes start to update at the same time $t(k)$. \diff{In asynchronous algorithms, each node is allowed to compute a new update at any time without waiting for each other, and use stale information}.}
	\label{fig3}
\end{figure}

Obviously,  all the nodes in a synchronous algorithm need to agree on the update time $t(k)$, which usually needs a global clock or synchronization of all nodes. It is worthy mentioning that the clock synchronization is not easy for a large-scale system and has been studied for quite a long time   \cite{xie2018fast}. From this point of view,  an asynchronous algorithm is much easier to implement. Since some nodes may compute updates much faster, it takes relatively long idle time to wait for the slowest node in a synchronous algorithm, which potentially reduces the computational efficiency. In fact, many numerical experiments indeed suggest better performance of asynchronous algorithms \cite{assran2017empirical}. A more throughout discussion can be found in \cite{bertsekas1989parallel}.
	
However, it is much more challenging to design an effective asynchronous algorithm.  Usually, a good synchronous algorithm may become invalid if we naively  implement it in an asynchronous fashion. For example, if there exists asynchrony in implementing the celebrated synchronous SPA (SynSPA) \cite{nedic2015distributed}, each node can only converge to a neighborhood of an optimal solution, the size of which depends on the degree of asynchrony \cite{assran2018asynchronous}. In contrast, each node of the AsySPA in this paper converges to the same optimal solution.

\subsection{Literature review}
The study of synchronous algorithms for the DOP is relatively mature, see e.g. \cite{nedic2009distributed,shi2015extra,qu2017harnessing,zhang2017distributed,magnusson2017convergence,nedic2015distributed,xie2018distributed,xi2017dextra,nedic2017achieving}.  Note that many distributed algorithms under time-varying graphs (e.g. \cite{nedic2015distributed}) are synchronous as a global clock is still required.
Recently, the interest is shift to the design of asynchronous algorithms \cite{notarnicola2017asynchronous,nedic2011asynchronous,bianchi2016coordinate,farina2018asynchronous,wei20131,xu2018convergence,peng2016arock,wu2018decentralized,lian2018asynchronous,zhao2015asynchronous,bof2017newton,tian2018asy,assran2018asynchronous,tsitsiklis1986distributed,cannelli2017asynchronous,eisen2017decentralized,li1987asymptotic}. To better distinguish asynchronous algorithms, let $\cT_i$ be the sequence of update times of node $i$ and define the {\em update rate} of node $i$ in the asymptotic sense, i.e.,
\bee
	R_i=\liminf_{t\rainfty}\frac{|\cT_i\cap[0, t]|}{\sum_{j=1}^n|\cT_j\cap[0, t]|}
\ene
where $|\cdot |$ returns the cardinality of a set and $n$ is total number of computing nodes. Clearly, $R_i$ is {\em unknown} for a generic asynchronous algorithm. The  AsySPA of this work aims to {\em simultaneously} overcome the following difficulties.
\begin{enumerate}\renewcommand{\labelenumi}{\rm(\alph{enumi})}
	\item  Each node independently computes a new update without waiting for others by using its \diff{currently available} information from its in-neighbor nodes.  This  can maximumly reduce the idle time of the node and fully exploit the computational resource. Obviously, this also easily results in Uneven Update Rates (UUR) among nodes, i.e., $R_i\neq R_j$ for some $i\ne j$, and that none, one or multiple nodes can update their variables at any time.
	\item The transmission delays between nodes are time-varying but bounded. In \cite{doan2017impact,wang2015cooperative,yang2017distributed,lin2016distributed}, the authors study the effect of communication delays in the context of DOPs. However, they require a global clock for implementation and cannot deal with UUR.
	\item  The graph to model  communications between nodes is directed and strongly connected. Usually, the design of  distributed algorithms over directed graphs is  much more difficult than the undirected case, see e.g. \cite{xi2017dextra,xie2018survey}.
\end{enumerate}

Asynchronous algorithms can be categorized depending on whether they are dealing with UUR or Even Update Rates (EUR).   In \cite{notarnicola2017asynchronous,nedic2011asynchronous,bianchi2016coordinate}, the convergence of their algorithms is proved only under EUR. For example, the time interval between two consecutive updates in \cite{notarnicola2017asynchronous,nedic2011asynchronous} is an identically and independently distributed process among nodes, which obviously results in EUR.  Although it is extended  to  UUR in  \cite{farina2018asynchronous}, it requires an essential process called logic-AND for synchronization, which introduces the costly synchronization penalty and is not necessary for the AsySPA. The algorithm in \cite{bianchi2016coordinate} requires that all nodes are randomly activated with the same probability, which again results in EUR.  Although the edge-based asynchronous algorithms in \cite{wei20131,xu2018convergence}  are valid with UUR, they require two nodes to concurrently compute updates, which easily results in deadlock. This is obviously very different from our case with UUR.  Moreover, the aforementioned works \cite{notarnicola2017asynchronous,nedic2011asynchronous,farina2018asynchronous,bianchi2016coordinate,wei20131,xu2018convergence} are only applicable to undirected graphs and do not consider communication delays.

The algorithms in \cite{peng2016arock,wu2018decentralized,lian2018asynchronous} are proposed to handle the UUR problem, but they require the update rate $R_i$ to design algorithms, which is impractical in real applications, see also \cite[Remark 1]{wu2018decentralized} and \cite{tian2018asy}. For example, the major results of Theorems 3-4 in \cite{wu2018decentralized} depend on the statistics of the activation random variable of each node, which is equivalent to have access to the update rate.

The asynchronous setting of this work is mostly close to that in \cite{bof2017newton,tian2018asy,assran2018asynchronous,tsitsiklis1986distributed}. As mentioned before, the algorithm in \cite{assran2018asynchronous} is a naive extension of the SynSPA and is unable to converge to an optimal solution of the DOP under UUR.  In \cite{bof2017newton}, the asynchronous algorithm combines the Newton-Raphson method with the robust push-sum consensus. Then, its convergence depends on stepsizes and initial conditions, both of which should be carefully selected. Moreover, it only considers lossy broadcast communications rather than information delays. \diff{In \cite{tian2018asy}, the ASY-SONATA is proposed by integrating ideas of the robust push-sum consensus and the SONATA \cite{sun2016distributed}, and its exact convergence is proved if each local objective function is strongly convex with Lipschitz continuous gradient. Since  the gradient tracking is key to the ASY-SONATA, it is unclear whether it is applicable on non-smooth convex functions. Moreover, its implementation seems much more complicated than that of the AsySPA.} In the seminal work \cite{tsitsiklis1986distributed}, each node asynchronously updates a component of the decision vector, which is different from the DOP. This problem is also of high interest and keeps attracting attention \cite{cannelli2017asynchronous,eisen2017decentralized}. For example, it has been adopted for the DOP in \cite{eisen2017decentralized}. However, it is unable to converge to an exact optimal solution of the DOP. The work \cite{li1987asymptotic} provides sufficient conditions for the convergence of a class of asynchronous stochastic distributed algorithms, which are related to contraction mapping and are unclear how to extend to the DOP.

\diff{Note that the graph unbalancedness also brings challenges to the distributed algorithm design, and has been recently resolved only in synchronous algorithms by using the push-sum consensus  \cite{tsianos2012push,nedic2017achieving,xi2017dextra}, both row and column-stochastic weights \cite{xin2018linear,priolo2014distributed,cai2012average}, and the unique identifier of a node \cite{xie2018distributed}, respectively. Our AsySPA adopts  the idea of push-sum consensus  for the non-smooth convex objective function. 
}

\subsection{The paper contribution and organization}
Inspired but also motivated by the limitation of the SynSPA in \cite{nedic2015distributed}, this work proposes the AsySPA to exactly solve the DOP over digraphs under UUR among nodes and time-varying communication delays.

\diff{Firstly}, we design a novel mechanism to adaptively adjust stepsizes in each node for UUR. The key idea is that for those nodes updating slower, they increase their stepsizes in computing the associated updates. By doing this, the sum of stepsizes in each node is asymptotically of the same, which resolves the issue induced by UUR and ensures that the AsySPA is remarkably robust to the degree of asynchrony among nodes.  Clearly, this significantly advances the asynchronous  algorithm in  \cite{assran2018asynchronous}. 

\diff{Secondly}, the idea of designing a delay-free and synchronous augmented system is adopted to address asynchrony and bounded delays for the DOP, although it has been initially proposed in \cite{nedic2010convergence}  only for the consensus problem with delays. 

\diff{Thirdly}, we study the convergence of the AsySPA by proposing a generalized subgradient algorithm, which bridges the gap between subgradient methods and incremental subgradient methods \cite{bertsekas2015convex}. Then, we perform the non-asymptotic analysis of this new algorithm, which clearly has its independent significance, and show how to reformulate the AsySPA as a generalized subgradient algorithm. With this tool, we prove that each node of the AsySPA converges to the same optimal solution of the DOP and evaluate its convergence rate, which explicitly shows the faster convergence of the AsySPA over the SynSPA.  Specifically, the AsySPA with bounded communication delays converges essentially of the same speed as the SynSPA with respect to the index $k$ (see Fig. \subref*{fig3b:exp2}).  Since $k$ is increased if {\em any} node completes an update, the computing time of  the AsySPA is much less than that of the  SynSPA, see Section \ref{sec_1b}.

\diff{Finally, the advantages of the AsySPA against the current state-of-the-art are validated via training a logistic regression classifier and a support vector machine on the \emph{Covertype} dataset \cite{Dua2017UCI}.}

The rest of this paper is organized as follows. In Section \ref{sec2}, we formally describe the DOP and introduce the SynSPA. In Section \ref{sec_2c}, we present the AsySPA and compare it with the literature.  In Section \ref{sec3}, we show that all nodes of the AsySPA asymptotically achieve consensus. Then, we design a generalized subgradient method in Section \ref{sec4}, which is used for the convergence analysis of the AsySPA in Section \ref{sec5}. In Section \ref{sec6},  numerical experiments are conducted to validate our theoretical results.  Section \ref{sec7} draws some concluding remarks.

\textbf{Notation}: We use $a,\vec{a},A$, and $\cA$ to denote a scalar, vector, matrix, and set, respectively. $\vec{a}^\mathsf{T}$ and $A^\mathsf{T}$ denote the transposes of $\vec{a}$ and $A$, respectively. $[A]_{ij}$ denotes the element in row $i$ and column $j$ of $A$. $|\cA|$ denotes the cardinality  of $\cA$. $\|\cdot\|$ and $\|\cdot\|_1$ denote the $l_2$-norm and $l_1$-norm of a vector or matrix, respectively. $\lfloor x\rfloor$ denotes the largest integer less than or equal to $x$. $\bR$ denotes the set of real numbers, $\bN$ denotes the set of positive integers. $\bone_n$ and $\bzero_n$ denote the vector with all ones and all zeros, respectively. With a slight abuse of notation, $\nabla f(\vec x)$ is a subgradient of a convex function $f(\vec x)$ at $\vec x$, i.e.,
\bee\label{subgradient}
	f(\vec y)\geq f(\vec x)+(\vec y-\vec x)^\mathsf{T}\nabla f(\vec x),\ \forall \vec y\in \bR^n.
\ene

\section{Problem Formulation}\label{sec2}
In this section, we introduce some basics of a digraph and describe the distributed optimization problem (DOP) over digraphs. Then, we briefly recapitulate the synchronous subgradient-push algorithm (SynSPA) \cite{nedic2015distributed}, which is central to the design of the AsySPA.
\subsection{The distributed optimization problem}

We are interested in the DOP over a digraph $\cG=(\cV,\cE)$ of the following form\footnote{ For ease of notation, we only consider the case with a scalar decision variable $x\in\bR$. Our results can be readily extended to the vector case.}
\bee\label{original}
	\minimize_{x\in\bR}\ f(x):=\sum_{i=1}^n f_i(x)
\ene
where the local function $f_i(x)$ is only known by an individual node $i$ of $\cG$, and $n$ denotes the number of nodes, i.e.,  $n=|\cV|$.

The objective is to solve the DOP via directed interactions between nodes, which are denoted by $\cE$. That is,  the directed edge $(i,j)\in\cE$ if node $j$ directly receives information from node $i$. Let $\cN_\text{in}^i=\{j|(j,i)\in\cE\}\cup\{i\}$ denote the set of in-neighbors of node $i$, and $\cN_\text{out}^i=\{j|(i,j)\in\cE\}\cup\{i\}$ denote the set of out-neighbors of $i$. A path from node $i$ to node $j$ is a sequence of consecutively directed edges from node $i$ to node $j$. Then, $\cG$ is {\em strongly connected} if there exists a directed path between any two nodes of the digraph.

\subsection{The synchronous subgradient-push algorithm}
\begin{algorithm}[!t]
	\caption{The SynSPA}\label{alg_spa}
	\begin{itemize}[leftmargin=*]
		\item{\bf Initialization:} each node $i$ set $y_i(1)=1$ and $x_i(1)$ be an arbitrary real number $x_i(0)$.
		\item{{\bf For} each node $i\in\cV$ at each time $k\in\bN$ {\bf do}}
		      \begin{enumerate}
			      \renewcommand{\labelenumi}{\theenumi:}
			      \item Broadcast $\widetilde x_i(k)=x_i(k)/|\cN_\text{out}^i|$ and $\widetilde y_i(k)=y_i(k)/|\cN_\text{out}^i|$ to all out-neighbors of $i$.
			      \item Receive $\widetilde x_j(k)$ and $\widetilde y_j(k)$ from each in-neighbor $j\in\cN_\text{in}^i$.
			      \item Update:
			            \bea\label{eq_spa}
				            w_i(k+1)      & =\sum_{j\in\cN_\text{in}^{i}}\widetilde x_j(k),\quad y_i(k+1)=\sum_{j\in\cN_\text{in}^{i}} \widetilde y_j(k),\quad\\ z_i(k+1)&=\frac{w_i(k+1)}{y_i(k+1)}, \\
				            x_i(k+1)      & = w_i(k+1)-\rho(k)\nabla f_i(z_i(k+1)).
			            \ena
			      \item $k\leftarrow k+1$
		      \end{enumerate}
		\item{{\bf Until} a stopping criteria is satisfied}
		\item{{\bf Return} $z_i(k)$.}
	\end{itemize}
\end{algorithm}
To solve the DOP in \eqref{original}, the seminal work \cite{nedic2015distributed} proposes a novel SynSPA  (see Algorithm \ref{alg_spa}), which converges to some optimal solution of problem \eqref{original} for all $i\in\cV$ under the following condition.
\begin{assum}\label{assum}
	\begin{enumerate}
		\renewcommand{\labelenumi}{\rm(\alph{enumi})}
		\item	The digraph $\cG$ is strongly connected.
		\item The stepsize $\{\rho(k)\}$ is a nonnegative and decreasing sequence and satisfies that
		      \bee
			      \sum_{k=0}^\infty\rho(k)=\infty,\quad \sum_{k=0}^\infty\rho(k)^2<\infty.
		      \ene
		\item The local function $f_i(x)$ is convex for all $i\in\cV$, and $f(x)$ has at least one optimal solution $x^\star$, i.e., $f(x^\star)=\inf_{x\in\bR} f(x)$.
		      Moreover, there exists a $c>0$ such that\footnote{\diff{The bounded subgradient is commonly used in subgradient methods for ease of notations, see e.g. \cite{nedic2015distributed}. A weaker assumption can be found in Exercise 3.6 of \cite{bertsekas2015convex}.}}
		      \bee\label{uppergrad}
			      |\nabla f_i(x)|\leq c,\ \forall i\in\cV,x\in\bR.
		      \ene
	\end{enumerate}
\end{assum}

The SynSPA has been well appreciated  in the literature and motivated several famous algorithms, e.g., \cite{xi2017dextra,xi2018add,nedic2017achieving,xie2018distributed,zeng2015extrapush,xu2017distributed}, some of which adopt its idea to address the unbalancedness of digraphs. However, the SynSPA  and almost all variants are synchronous. For instance, all nodes in Algorithm \ref{alg_spa} compute their updates simultaneously and a global clock is required for synchronization. There is one exception \cite{assran2018asynchronous}, which {\em naively} extends the SynSPA to an asynchronous version. Unfortunately, it  only ``converges" to a neighborhood of an optimal solution of the DOP in \eqref{original}, the size of which depends on the degree of asynchrony, if all $f_i(x)$ are strongly-convex with Lipschitz-continuous gradients. As explicitly stated in \cite{assran2018asynchronous}, the inexact convergence results from the UUR among nodes, which is unavoidable in asynchronous algorithms.  This work proposes the exact AsySPA by  adaptively tuning stepsizes to resolve the UUR issue.

\section{The AsySPA}\label{sec_2c}
The AsySPA is provided in Algorithm \ref{alg_asyspa}, where we drop the iteration index $k$ to emphasize the fact of asynchronous implementation and simplify notations.

The implementation of the AsySPA is simple and as follows\footnote{\diff{As in the SynSPA, the implementation requires each node to know its out-degree, which is common in the literature \cite{hendrickx2015fundamental}.}}. Each node $i$ keeps receiving messages from its in-neighbors, and copies to their local buffers. When node $i$ is {\em locally} activated by some predefined event to compute a new update, it simply reads {\em all} the stored data in each buffer to
perform computation in \eqref{eq_asyspa}. The used data are then discarded from buffers. Whilst computing, node $i$ may also receive new data from its in-neighbors, which are stored in buffers only for the next update. \diff{Note that a node may obtain multiple receptions from one neighbor before next update, all of which are stored in buffers and used.} After completing the computation in \eqref{eq_asyspa}, the updated triple $(\widetilde x_i,\widetilde y_i,l_i)$ is broadcast to its out-neighbors, which include node $i$ as well. \diff{In practice, buffers are essentially not needed since operations over them are either taking summation or maximization in \eqref{eq_asyspa}, both of which can be done online. For example, without $\cL_i$ we can simply keep $\widetilde{l}$ and update its value immediately only if a new $l_j$ from an in-neighbor is received and $l_j>\widetilde{l}$.}

In comparison  with the SynSPA, there are some striking differences. Firstly, the activation of node $i$ is operated locally and is independent of other nodes. For example, if node $i$ observes that one of its buffers is close to full and its computation task is empty, it generates an activation to compute a new update. It can also generate an activation once an update is completed or periodically via a local time clock. While in the SynSPA, each node needs to have access to a global clock to synchronize the index $k$. \diff{Secondly, not only the latest triple $(\widetilde x_j,\widetilde y_j,l_j), j\in\cN_\text{in}^{i}$ but also the stale triples from in-neighbors are possibly used to compute an update. It should be stressed that $\cX_i$ may contain multiple receptions from a neighbor, which is used to address the asynchrony problem}. In the SynSPA, only \diff{the most recent} iterate from a node is used per update.

The last but not the least, an auxiliary integer variable $l_i$ is designed in the AsySPA, which renders it substantially different from the asynchronous algorithm in \cite{assran2018asynchronous} and is not necessary in SynSPA.  The variable $l_i$ is introduced only for adaptively adjusting stepsizes $\{\alpha_i\}$ in \eqref{eq_asyspa}, and is key to the {\em exact} convergence of the AsySPA. However, the computation cost of the AsySPA is essentially of the same as that of the SynSPA. If there is no asynchrony among nodes, the AsySPA apparently reduces to the SynSPA.

\begin{algorithm}[!t]
	\caption{The AsySPA}\label{alg_asyspa}
	\begin{itemize}[leftmargin=*]
		\item{\bf Initialization:} Each node $i$ sets $l_i=1$, $y_i=1$, assigns $x_i$ an arbitrary real number and creates local buffers $\cX_i$, $\cY_i$ and $\cL_i$. Then it broadcasts $\widetilde x_i=x_i/|\cN_\text{out}^i|$, $\widetilde y_i=y_i/|\cN_\text{out}^i|$ and $l_i$ to its out-neighbors.
		\item{{\bf For} each node $i\in\cV$, {\bf do}}
		      \begin{enumerate}
			      \renewcommand{\labelenumi}{\theenumi:}
			      \item Keep receiving $\widetilde x_j$, $\widetilde y_j$ and $l_j$ from in-neighbors and storing them respectively to the buffers $\cX_i$, $\cY_i$ and $\cL_i$, until the node is activated to update.
			      \item Compute $w_i$, $y_i$, $z_i$ and $x_i$ by
			            \bea\label{eq_asyspa}
				            w_i      & =\sum_{\widetilde x_j\in\cX_i}\widetilde x_j,\quad y_i=\sum_{\widetilde y_j\in\cY_i}\widetilde y_j,\quad z_i=\frac{w_i}{y_i}, \\
				            \widetilde l & = \max_{l_j\in\cL_i}l_j,\  \alpha_i=\sum\nolimits_{k=l_i}^{\widetilde l}\rho(k)\\
				            x_i      & = w_i-\alpha_i\nabla f_i(z_i).
			            \ena
			            and let  $l_i= \widetilde l+1$.
			      \item Broadcast $\widetilde x_i=x_i/|\cN_\text{out}^i|$, $\widetilde y_i=y_i/|\cN_\text{out}^i|$ and $l_i$ to all out-neighbors of node $i$, and empty $\cX_i$, $\cY_i$ and $\cL_i$.
		      \end{enumerate}
		\item{{\bf Until} a stopping criteria is satisfied}
		\item{{\bf Return} $z_i$.}
	\end{itemize}
\end{algorithm}

Without adaptively adjusting stepsizes, it is just the asynchronous algorithm in \cite{assran2018asynchronous}, which can only converge to a neighborhood of an optimal solution of the DOP in \eqref{original}. The size of the neighborhood depends on the degree of asynchrony. It is well acknowledged that the major difficulty in an asynchronous algorithm is how to effectively address UUR issue among nodes.  Intuitively, a local function $f_i(x)$ of the node with a high update rate  will be considered more often than the rest. If we do not adaptively adjust stepsizes (see \cite{assran2018asynchronous}), the resulting algorithm then minimizes a weighted sum of $f_i(x)$, whose weights are proportional to update rates of associated nodes. \diff{To informally exposit it, consider a special example that two nodes form an {\em undirected} graph with edge weight $1/2$ and  assume that node $1$ updates at each discrete-times, and node $2$ only updates at even times. We further let $\rho_k=\rho$, where $\rho>0$ is sufficiently small. Without adaptively tuning stepsizes, the asynchronous version of Algorithm \ref{alg_spa} can be reduced to that
\bea
&x_1(2k+2)=x_1(2k+1)-\rho \nabla f_1(x_1(2k+1))\\
&=\frac{x_1(2k)+x_2(2k)}{2}-\rho \nabla f_1(x_1(2k))-\rho \nabla f_1(x_1(2k+1))\\
&\approx 0.5x_1(2k)+0.5x_2(2k)- 2 \rho \nabla f_1(x_1(2k)),\\
&x_2(2k+2)=0.5x_1(2k)+0.5x_2(2k)-  \rho \nabla f_2(x_2(2k)),
\ena
where  $x_1(2k) \approx x_1(2k+1)$ is used as $\rho$ is sufficiently small.  In view of the distributed algorithm in \cite{nedic2009distributed}, one can easily observe that the above tends to  minimize the objective function $2 f_1(x)+f_2(x)$. Clearly, this is not the additive objective function in \eqref{original}. This phenomenon is also observed  but unresolved in \cite{assran2018asynchronous}.}

To address it, our novel idea is to adaptively adjust stepsizes by using relatively ``large" stepsizes for nodes with low update rates, and eventually the sum of stepsizes associated with each local function is approximately equal.  Specifically, if a node $i$ finds it progress less towards its associated subgradient direction than its in-neighbors, it applies a relatively larger stepsize in the next update. The role of the auxiliary variable $l_i$ in the AsySPA is thus designed to roughly record the maximum number of updates in its in-neighbors' nodes. \diff{In fact, let $\Omega_{i,t}$ be the set of stepsizes used in node $i$ before time $t$. Under mild conditions, we shall show that  $\text{sum}(\Omega_{i,t})- \text{sum}(\Omega_{j,t})$ asymptotically converges to 0 as $t$ goes to infinity  where $\text{sum}(\cX)$ returns the sum of elements in the set $\cX$}. That is, the sum of  stepsizes used in each node is eventually of the same, which is critical to the exact convergence of the AsySPA and is the key difference from \cite{assran2018asynchronous}.

\section{Consensus achieving of the AsySPA}\label{sec3}
To elaborate the exact convergence of the AsySPA, i.e., each node of the AsySPA converges to the same optimal solution of the DOP in \eqref{original}, we first show that the nodes asymptotically reach consensus by adopting the augmented system approach. Then, we prove that the average state of nodes converges to an optimal solution. To this end, two assumptions are made.

\begin{assum}[Bounded activation time interval]\label{assum3}Let $t_i$ and $t_i^+$ be  two consecutive activation time of node $i$,  there exist two positive constants $\underline{\tau}>0$ and $\bar{\tau}<\infty$ such that $\underline{\tau}\leq |t_i-t_i^+| \leq \bar{\tau}$ for all $i\in\cV$.
\end{assum}

Clearly, Assumption \ref{assum3} is easy to satisfy in implementing the AsySPA. For example, $|t_i-t_i^+|$ is bounded below naturally since the computation of each update consumes time, and thus the time interval between two updates cannot be arbitrarily small. While the computational complexity in each update is essentially of the same, each update can be computed in a finite time, and the time interval between two updates can be made finite.  \begin{assum}[Bounded transmission delays]\label{assum6}For any $(i,j)\in\cE$, the time-varying transmission delay from node $i$ to node $j$ is uniformly bounded by a positive constant $\tau>0$.
\end{assum}

Bounded transmission delays are common and reasonable \cite{lin2016distributed,doan2017impact}. Very recently, it is relaxed to an unbounded case in \cite{wu2018decentralized}. However, their analysis is limited to {\em undirected} graphs and is unclear how to extend to directed graphs. It should be noted that all the above constants are not needed in implementing the AsySPA. 

This section is dedicated to proving consensus among nodes. Particularly, we show that $z_i$ in the AsySPA asymptotically converges to the same point under the aforementioned assumptions.

\subsection{A key technical lemma}
Under Assumption \ref{assum3}, the activation time of nodes cannot be continuum. Thus, let $\cT=\{t(k)\}_{k\ge 1}$ be an increasing sequence of activation times of all nodes, e.g. $t\in\cT$ if there is at least one node being activated at time $t$. Although the index $k$ is a {\em global} counter, it is only introduced for convergence analysis and is not needed for implementing the AsySPA.
We further denote $\cT_i\subseteq\cT$  the sequence of activation times of node $i$, i.e., $t\in\cT_i$ if  node $i$ is activated at time $t$.
The following lemma is key to the convergence analysis of the AsySPA.
\begin{lemma}\label{lemma1} The following statements hold.
	\begin{enumerate}[label=(\alph*)]
		\item Under Assumption \ref{assum3}, let $b_1=(n-1)\lfloor\bar{\tau}/\underline{\tau}\rfloor+1$,  each node is activated at least once within the time interval $(t(k),t(k+b_1)]$. Moreover, let $\cG(k)=(\cV,\cE(k))$ be the effective graph at time $t(k)$, where $(i,j)\in\cE(k)$ if $t(k)\in\cT_i$ and $(i,j)\in\cE$, then the union of graphs $\bigcup_{t=k}^{k+b_1}\cG(t)$ is strongly connected for any $k$.
		\item Under Assumptions \ref{assum3} and \ref{assum6}, let $b_2=n\lfloor\tau/\underline{\tau}\rfloor$ and $b=b_1+b_2$, the information sent from node $i$ at time $t(k)$ can be received by node $j$ before time $t(k+b_2)$ and used for computing an update before time $t(k+b)$ for any $k$ and $(i,j)\in\cE$.
		\item Under Assumptions \ref{assum}(a), \ref{assum3} and \ref{assum6}, we have $|l_i(k)-l_j(k)|\leq nb$ and $0\leq l_i(k+1)-l_i(k)\leq nb+1$ for any $i,j\in\cV$ and $k$, where $l_i(k)$ is the value of $l_i$ at time $t(k)$.
	\end{enumerate}
\end{lemma}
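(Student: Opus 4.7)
The plan is to treat the three parts in sequence, with each building on the previous.

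For part (a), I would argue by contradiction. Suppose some node $i$ is not activated during $(t(k), t(k+b_1)]$. Then this interval lies between two consecutive activations of node $i$, so by Assumption \ref{assum3} its length is at most $\bar{\tau}$. Next I would count: each of the other $n-1$ nodes has consecutive activations separated by at least $\underline{\tau}$, hence can activate at most $\lfloor \bar{\tau}/\underline{\tau}\rfloor$ times inside any half-open window of length $\le \bar{\tau}$. Summing over the $n-1$ nodes yields at most $(n-1)\lfloor \bar{\tau}/\underline{\tau}\rfloor = b_1 - 1$ global activation times, contradicting that $b_1$ such times lie in $(t(k), t(k+b_1)]$. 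The strong-connectivity claim then follows immediately: since each $i\in\cV$ is activated at least once in this window, every edge $(i,j)\in\cE$ belongs to $\cE(t)$ for some $t\in\{k,\dots,k+b_1\}$, so $\bigcup_{t=k}^{k+b_1}\cG(t)$ contains $\cG$, which is strongly connected by Assumption \ref{assum}(a).

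For part (b), I would first bound the arrival time: by Assumption \ref{assum6} a message sent at $t(k)$ is received by time $t(k)+\tau$. Applying the same counting argument as in (a) but with $\tau$ in place of $\bar{\tau}$, the total number of activation times inside any half-open window of length $\le\tau$ is at most $n\lfloor \tau/\underline{\tau}\rfloor = b_2$. Hence $t(k+b_2)-t(k)>\tau$, so the message has been placed in the recipient's buffer strictly before $t(k+b_2)$. Applying part (a) to the interval starting at $t(k+b_2)$, the recipient node activates at least once before $t(k+b_2+b_1)=t(k+b)$, and by the buffer-read rule in \eqref{eq_asyspa} it consumes the message in that update.

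For part (c), the non-decrease $l_i(k+1)\ge l_i(k)$ is immediate: since $l_i$ always lies in $\cL_i$ at activation (via the self-loop $i\in\cN_\text{in}^i$), the new value $\widetilde l+1\ge l_i+1>l_i$, and on non-activation steps $l_i$ is unchanged. I would then bound the global spread. Let $M(k)=\max_j l_j(k)$ and $m(k)=\min_j l_j(k)$. From the update $l_i=\widetilde l+1\le M(k)+1$, it follows that $M(k+1)\le M(k)+1$. For the lower envelope, I would use parts (a) and (b) together with strong connectivity: if $l_{i^\star}(k_0)=M(k_0)$, then within $b$ steps every out-neighbor of $i^\star$ both receives $l_{i^\star}$ and activates, so its $l$ becomes $\ge M(k_0)$; iterating along any directed path and using that the diameter of $\cG$ is at most $n-1$, within $nb$ steps every node has $l\ge M(k_0)$. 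Since $M$ grew by at most $nb$ during that window, $M(k)-m(k)\le nb$, giving $|l_i(k)-l_j(k)|\le nb$. The second inequality then follows from $l_i(k+1)\le M(k)+1\le m(k)+nb+1\le l_i(k)+nb+1$.

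The main obstacle I anticipate is the propagation argument in (c): care is needed because $M(k_0)$ itself may increase while the information is still propagating, so I would phrase the induction as ``the set of nodes whose $l$ value equals or exceeds the original $M(k_0)$ strictly grows every $b$ steps until it covers $\cV$,'' using the strong-connectivity of $\cG$ and part (b) to extend the set along one new out-edge per round. The rest of the argument is essentially bookkeeping on counting activations in bounded-length windows.
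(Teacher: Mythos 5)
Your proposal is correct and follows essentially the same route as the paper: the same pigeonhole counting of activations in a window of length $\bar{\tau}$ (resp.\ $\tau$) for parts (a) and (b), and for part (c) the same sandwich argument showing the maximum $l$-value grows by at most one per step while propagating to every node within $nb$ steps via part (b) and strong connectivity. Your explicit handling of the fact that the maximum may keep increasing during propagation is a bit more careful than the paper's terse treatment, but it is the same underlying argument.
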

\begin{proof} (a) Suppose that node $i$ is not activated during the time interval $(t(p),t(q)],p,q\in\bN$ but is activated at $t(q+1)$. It follows from  Assumption \ref{assum3} that $t(q)-t(p)\leq\bar{\tau}$. Moreover, any other node can be activated at most $\lfloor(t(q)-t(p))/\underline{\tau}\rfloor\leq\lfloor\bar{\tau}/\underline{\tau}\rfloor$ times during the time interval $(t(p),t(q)]$, which implies $q-p\leq (n-1)\lfloor\bar{\tau}/\underline{\tau}\rfloor$. Hence the first part of the result follows. This also implies the uniformly joint connectivity of $\cG(k)$.

	(b) Suppose that node $i$ sends information at time $t(p),p\in\bN$ and node $j$ receives it in the time interval $(t(q),t(q+1)],q\in\bN$. It follows from Assumption \ref{assum6} that $t(q)-t(p)\leq\tau$. Moreover,   Assumption \ref{assum3} implies that any node can be activated at most $\lfloor{\tau}/\underline{\tau}\rfloor$ times during the time interval $[t(p),t(q)]$, i.e., $q-p+1\leq n\lfloor {\tau}/\underline{\tau}\ \rfloor$, and hence $q+1\leq p+n\lfloor{\tau}/\underline{\tau}\rfloor$. The result follows by letting $p=k$. Jointly with Lemma \ref{lemma1}(a), the rest of the results follow immediately.

	(c) Let $\bar l(k)=\max_{i\in\cV}l_i(k)$.  Since $l_i(k+1)\leq \bar l(k)+1$ for all $i$, we have that $\bar l(k+1)\leq \bar l(k)+1$. Lemma \ref{lemma1}(b) and Assumption \ref{assum}(a) jointly imply that the information from node $i$ at time $t(k)$ must reach node $j$ before time $t(k+nb)$ for any $i,j\in\cV$. Thus, $l_i(k+nb+1)\geq \bar l(k)+1$ for all $i\in\cV$. Combining the above yields that
	\bea
		\bar l(k)+1&\leq l_i(k+nb+1)\leq \bar l(k+nb+1)\\
		&\leq \bar l(k)+nb+1
	\ena
	for all $i$ and $k$. This also implies that $|l_i(k)-l_j(k)|\leq |\bar l(k)+nb+1-\bar l(k)-1|= nb$ for all $i,j\in\cV$, and $l_i(k+1)-l_i(k)\leq \bar l(k)+1-l_i(k)\leq nb+1$.
\end{proof}

\subsection{Reformulation of the AsySPA via an augmented graph}

Denote the latest state of node $i$ just before time $t(k)$ by $w_i(k),x_i(k),y_i(k),z_i(k)$ and $l_i(k)$. Then, it is clear that
\bea\label{reforiter}
	w_i(k+1)&=w_i(k),x_i(k+1)=x_i(k),y_i(k+1)=y_i(k),\\
	z_i(k+1)&=z_i(k),l_i(k+1)=l_i(k),\ \forall t(k)\notin \cT_i.
\ena

We turn to reformulate the above iteration by designing an augmented graph to handle bounded time-varying transmission delays and asynchrony. Note that this approach has been adopted in \cite{nedic2010convergence} for studying the consensus problem with bounded transmission delays.  To this end, we associate each node $i$ with $b$ virtual nodes $\{v_i^{(1)},...,v_i^{(b)}\}$, where $b$ is defined in Lemma \ref{lemma1}(b). Hence, the total number of virtual nodes  is $nb$. \diff{Then, we construct an augmented graph\footnote{The augmented graph is also only introduced for the convergence analysis of the AsySPA and is not needed for its implementation.} $\widetilde \cG(k)=(\widetilde{\cV},\widetilde \cE(k))$ to model the communication digraphs between nodes at time $t(k)$. If $(i,j)\in\cE$ in $\cG$, then edges $(v_j^{(1)},j)$, $(v_j^{(2)},v_j^{(1)})$,..., and $(v_j^{(b)},v_j^{(b-1)})$ are always included in $\widetilde \cE(k)$, and only one of edges $(i,v_j^{(1)})$, $(i,v_j^{(2)}),\ldots, (i,v_j^{(b)})$ and $(i,j)$ shall be possibly included in $\widetilde \cE(k)$, which depends on the transmission delay or asynchrony  between node $i$ and node $j$ at time $t(k)$. Specifically, suppose that node $i$ sends message to node $j$ at time $t(k)$, which may be subject to transmission delay, and node $j$ only uses this message at time $t(k+u+1)$, where $1\le u\le b-1$ by Lemma \ref{lemma1}(b), then $(i,v_j^{(u)})\in\widetilde{\cE}(k)$. If there is no communication delay between node $i$ and node $j$ at time $t(k)$ and the transmitted message is used by node $j$ at time $t(k+1)$, then only $(i,j)\in\widetilde{\cE}(k)$. See Fig. \ref{fig1} for an augmented graph of  Fig. \ref{fig4} with virtual nodes.
\begin{figure}[!t]
	\centering
	\includegraphics[width=0.7\linewidth]{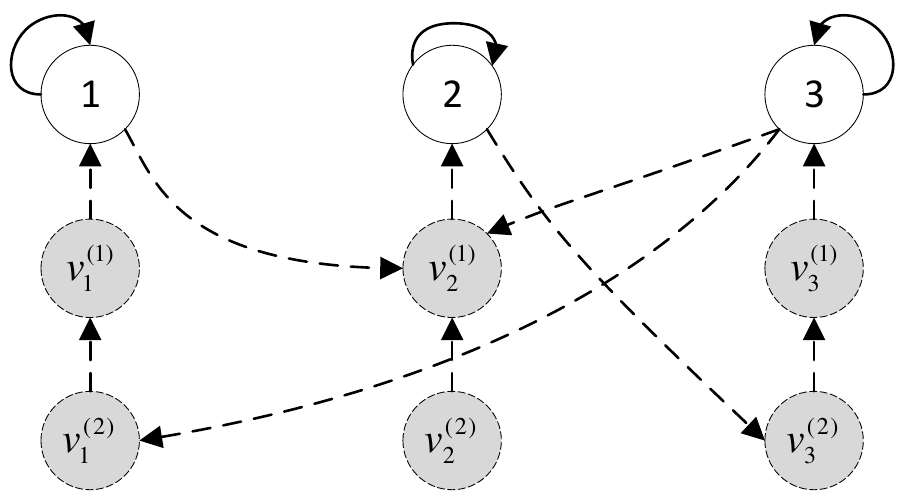}
	\caption{\diff{An augmented graph of Fig. \ref{fig4} at time $t(k)$, where a self-loop is included for each non-virtual node and $b=2$. At time $t(k)$, node $1$ sends message to node $2$.  If the message is received at the interval $(t(k+1), t(k+2)]$ due to the transmission delay and is used by node $2$ at $t(k+2)$, then only $(1, v_2^{(1)})\in \widetilde \cE(k)$. Even if node $2$ sends message to node $3$ at time $t(k)$ without delay and this message is  used by node $3$ at time $t(k+3)$ due to the asynchrony, then only $(2, v_3^{(2)})\in \widetilde \cE(k)$. Similarly for node $3$.}}
	\label{fig1}
\end{figure}

Now, we briefly explain how to use the augmented graph to address delays and asynchrony. Suppose that node $i$ sends message to node $j$ at $t(k)$ and  due to delay node $j$ receives it at time interval $(t(k+1),t(k+2)]$. However, this message is used by node $j$ at time $t(k+3)$ due to asynchrony.  In this case, all the edges $(i,v_j^{(2)}), (v_j^{(2)},v_j^{(1)}), (v_j^{(1)},j)$ are included in $\widetilde \cE(k)$. In the augmented graph, this can be viewed as that node $i$ sends message  to node $j$ via the relay nodes $v_j^{(2)}$ and $v_j^{(1)}$, respectively. Clearly, all non-virtual nodes in $\widetilde \cG$ receive the same information as that in $\cG$ and hence their updates appear to be synchronous and without delays.}

In $\widetilde \cG$, we enumerate non-virtual nodes first and then  virtual nodes. Specifically, let $\cV^{(u)}=\{v_i^{(u)},i\in\cV\}$ for all $u=\{1,...,b\}$ and $\widetilde \cV=\{\cV,\cV^{(1)},...,\cV^{(b)}\}$. That is, node $i$ of $\cG$ is the $i$-th node in $\widetilde \cG$, and the $(nu+i)$-th node in $\widetilde \cG$ is the virtual node $v_i^{(u)}$ for all $i\in\cV$ and $u\in\{1,...,b\}$. Let $\widetilde n:=|\widetilde \cV|=n(b+1)$ be the number of nodes of $\widetilde \cG$. Then, the AsySPA can be expressed as a synchronous and delay-free augmented system
\bea\label{eq1_sec2}
	\widetilde{\vec x}(k+1)&=\widetilde A(k)\widetilde{\vec x}(k)-\vec g(k),\\
	\widetilde{\vec y}(k+1)& = \widetilde A(k)\widetilde{\vec y}(k),\\
	z_i(k+1) &= \frac{[\widetilde A(k)\widetilde{\vec x}(k)]_i}{\widetilde y_i(k+1)},\ \forall i\in\cV
\ena
where $\widetilde{\vec x}(k)=[\widetilde x_1(k),...,\widetilde x_{\widetilde n}(k)]^\T,\widetilde{ \vec y}(k)=[\widetilde y_1(k),...,\widetilde y_{\widetilde n}(k)]^\T$, and $\widetilde{\vec x}(1)=[\widetilde x_1(0),...,\widetilde x_{n}(0),\bzero^\T]^\T$,  $\widetilde{\vec y}(1)=[\bone_n^\T,\bzero^\T]^\T$,
\bee
	\begin{split}
		&[\widetilde A(k)]_{ji}\\
		&=\left\{\begin{array}{ll}
			{1}/{|\cN_\text{out}^i|}, & \text{if $i\in\cV$, $t(k)\in\cT_i$, $j=nu+v$, $\tau_{iv}(k)=u$} \\
			1,                            & \text{if $i\in\cV$, $t(k)\notin\cT_i$ and $j=i$}                \\
			1,                            & \text{if $i\notin\cV$ and $j=i-n$}                              \\
			0,                            & \text{otherwise}
		\end{array}\right.
	\end{split}
\ene
and $\tau_{iv}(k)\in\bN$ denotes the transmission delay from node $i$  to node $v$ at time $t(k)$. Under Assumption \ref{assum6}, $\tau_{ii}(k)=0$ for all $i\in\cV,k\in\bN$, and  $\tau_{iv}(k)=\infty$ for any $(i,v)\notin\cE$. Finally,
\bee\label{eq_g}
	\vec g(k)=[g_1(k),...,g_n(k),\bzero^\T]^\T
\ene
where
\bee
	g_i(k)= \left\{\begin{array}{ll}
		\sum_{k=l_i(k)}^{l_i(k+1)-1}\rho(k)\nabla f_i(z_i(k+1)), & \text{ if $t(k)\in\cT_i$} \\
		0,                                                       & \text{ otherwise}
	\end{array}\right.
\ene
for all $i\in\cV$.

Note that $\widetilde A(k)$ is a column-stochastic matrix. Moreover, the last equality in \eqref{eq1_sec2} is only for nodes in $\cV$, and is well-defined because $[\widetilde A(k)]_{ii}>0$ and hence $\widetilde y_i(k)>0$ for all $k$ and $i\in\cV$. In fact, the infimum of $\widetilde y_i(k)$ is strictly greater than $0$ as proved in the next subsection. Since $\lim_{k\rightarrow\infty}g_i(k)=0$ for all $i$ from Assumption \ref{assum}  and Lemma \ref{lemma1}(c), we have $\lim_{k\rightarrow\infty} \vec g(k)=\bzero$, which is key to consensus achieving.

\subsection{Asymptotic consensus among nodes}
We now show that each node of the AsySPA asymptotically converges to consensus. To this end, we define
\bee
	\Phi(k,t)=\widetilde A(k)\widetilde A(k-1)\cdots\widetilde A(t+1)\widetilde A(t)
\ene
and
\bee
	\bar{x}(k)=\frac{\bone^\T\widetilde{ \vec x}(k)}{\widetilde n}=\frac{\bone^\T\widetilde{ \vec x}(k)}{n(b+1)}
\ene
which is the average of elements in $\widetilde{ \vec x}(k)$.

The following lemma states that $\Phi(k,t)$ converges to a rank-one matrix and implies that $\widetilde y_i(k)$ is strictly greater than 0 for all $k$ and $i\in\cV$.
\begin{lemma}\label{lemma2}
	Under Assumptions \ref{assum}, \ref{assum3} and \ref{assum6}, the following statements are in force.
	\begin{enumerate}[label=(\alph*)]
		\item There exists a nonnegative vector $\phi(k)$ satisfying that $\bone^\T\phi(k)=1$ and
		      \begin{equation}
			      \|\Phi(k,t)- \phi(k)\bone^\T\|_1\leq\alpha\lambda^{k-t}
		      \end{equation}
		      for all $k,t\in\bN,k\geq t$, where
		      \bee
			      \alpha = 4n(1+n^{nb}),\ \lambda=\left(1-\frac{1}{n^{nb}}\right)^{\frac{1}{nb}}
		      \ene
		      and $b$ is defined in Lemma \ref{lemma1}(c).
		\item Moreover, $
			      \sum_{j=1}^n[\Phi(k,1)]_{ij}\geq {n^{-nb}},\ \forall i\in\cV,k\in\bN.$
	\end{enumerate}
\end{lemma}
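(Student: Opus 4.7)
The plan is to combine standard ergodicity theory for products of column-stochastic matrices with the path-length bounds in Lemma \ref{lemma1}. Three structural properties of the augmented-system matrices $\widetilde A(k)$ will serve as the workhorse: (i) each $\widetilde A(k)$ is column-stochastic by construction; (ii) every nonzero entry is at least $1/n$, since $|\cN_\text{out}^i|\leq n$ while all self-loops at non-virtual inactive nodes and all virtual-to-virtual forwarding edges carry weight $1$; and (iii) by Lemma \ref{lemma1}(b), within any window of $b$ consecutive time indices each non-virtual node is activated and its message reaches every in-neighbor, possibly through a chain of up to $b-1$ virtual relay nodes.

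For part (b), I would chain these properties together. Because the underlying strongly connected digraph $\cG$ has diameter at most $n-1$, information from any $j\in\cV$ reaches any $i\in\cV$ through at most $n$ relay hops in the union of effective graphs, and each such hop spans at most $b$ augmented-system steps. Hence for every $k\geq nb$ the product $\Phi(k,1)$ admits, for each pair $(i,j)\in\cV\times\cV$, a positive directed walk from column $j$ to row $i$ in the associated product graph whose cumulative weight is at least $n^{-nb}$. For $k<nb$ or time horizons longer than $nb$, the self-loops at inactive non-virtual nodes allow us to pad any walk without weight loss. Fixing $j$ and summing the corresponding column entries over $j\in\cV$ then yields $\sum_{j=1}^n[\Phi(k,1)]_{ij}\geq n^{-nb}$.

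For part (a), my approach is a Hajnal-type coefficient-of-ergodicity argument tailored to the column-stochastic setting. I would introduce the column-ergodicity coefficient
\[
    \delta(M)=\tfrac{1}{2}\max_{p,q}\sum_{r}\bigl|[M]_{rp}-[M]_{rq}\bigr|,
\]
and establish the contraction inequality $\delta(\widetilde A(k)M)\leq (1-\mu)\,\delta(M)$ for any column-stochastic $M$ whenever $\widetilde A(k)$ admits a row whose entries are all bounded below by $\mu$. Combining this with part (b) applied on each $nb$-step block gives $\delta(\Phi(k,t))\leq (1-n^{-nb})^{\lfloor(k-t+1)/nb\rfloor}\delta(\Phi(t,t))$. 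The vector $\phi(k)$ then emerges as the common limit of the columns of $\Phi(k,t)$ via a Cauchy argument on telescoping column differences; column-stochasticity of each factor preserves $\phi(k)\geq 0$ and $\bone^\T\phi(k)=1$. A standard inequality $\|\Phi(k,t)-\phi(k)\bone^\T\|_1\leq 2\widetilde n\,\delta(\Phi(k,t))$ converts the per-column contraction into the claimed $\ell_1$ matrix-norm bound.

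The main technical obstacle will be matching the specific prefactor $\alpha=4n(1+n^{nb})$ and the per-step rate $\lambda=(1-n^{-nb})^{1/nb}$ exactly as stated. Redistributing the block contraction $1-n^{-nb}$ across individual time steps introduces a slack of $(1-n^{-nb})^{-(nb-1)/nb}$ that must be absorbed into $\alpha$, while the augmented dimension $\widetilde n=n(b+1)$ contributes additional multiplicative overhead through the norm-conversion step. A clean bookkeeping would handle the initial regime $k-t<nb$ using the crude estimate $\|\Phi(k,t)-\phi(k)\bone^\T\|_1\leq 2\widetilde n$, invoke the geometric contraction only for $k-t\geq nb$, and merge the two pieces into the single envelope $\alpha\lambda^{k-t}$.
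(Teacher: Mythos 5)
Your proposal is correct in substance, but you should know that the paper does not actually prove part (a) from scratch: it invokes Lemma 5 of \cite{nedic2010convergence} directly, using Lemma \ref{lemma1} of the present paper only to check that the augmented chain $\{\widetilde A(k)\}$ satisfies that lemma's hypotheses (uniform joint connectivity over windows of length $b$, positive diagonals, nonzero entries bounded below by $1/n$). Your Hajnal/Dobrushin coefficient-of-ergodicity argument is essentially a self-contained reconstruction of what lies behind that citation, so the two routes are the same mathematics packaged differently; what your version buys is self-containedness, at the cost of having to reproduce the specific constants $\alpha=4n(1+n^{nb})$ and $\lambda=(1-n^{-nb})^{1/nb}$, which the paper inherits for free from the reference and which you correctly identify as the bookkeeping burden. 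For part (b) your argument also matches the paper's in spirit: for $k\le nb$ the paper lower-bounds the diagonal entry by $\prod_t[\widetilde A(t)]_{ii}\ge (1/|\cN_\text{out}^i|)^{nb}\ge n^{-nb}$ (your ``padding by self-loops''---note that when node $i$ is active its self-loop has weight $1/|\cN_\text{out}^i|$, not $1$, so the padding is not weight-free, though the $n^{-nb}$ budget absorbs this), and for $k>nb$ it imports the entrywise bound $[\Phi(k,k-nb+1)]_{ij}\ge n^{-nb}$ for all $i\in\cV$, $j\in\widetilde\cV$ from \cite{nedic2010convergence} and propagates it backward by induction using column-stochasticity, exactly as your walk argument would.

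One point you must tighten: to run the column-ergodicity contraction on $nb$-step blocks you need a row of $\Phi(k,k-nb+1)$ bounded below by $n^{-nb}$ in \emph{all} $\widetilde n=n(b+1)$ columns, including the virtual ones, since the Dobrushin-type bound $\delta(M)\le 1-\min_{p,q}\sum_r\min([M]_{rp},[M]_{rq})$ ranges over every pair of columns of the augmented system. Your walk argument (and the statement of part (b) itself) only establishes positivity for source columns $j\in\cV$. The fix is routine---a virtual node $v_j^{(u)}$ forwards its mass deterministically with weight $1$ down the chain to the non-virtual node $j$ within $u\le b$ steps, after which your non-virtual walk applies---but without saying this the contraction factor $1-n^{-nb}$ is not justified.
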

\begin{proof}
	The first part can be found in \cite[Lemma 5]{nedic2010convergence} by jointly using  Lemma \ref{lemma1}.

	To prove (b), two cases are separately studied. If $k\leq nb$, then $[\Phi(k,t)]_{ii}\geq[\widetilde A(k)]_{ii}[\widetilde A(k-1)]_{ii}\cdots[\widetilde A(t)]_{ii}\geq (1/|\cN_\text{out}^i|)^{nb}\geq  1/n^{nb}$, and hence the result is obtained.

	If $k> nb$, it follows from \cite[Lemma 2(b)]{nedic2010convergence} that
	$[\Phi(k,k-nb+1)]_{ij}\geq 1/n^{nb}$ for all $i\in\cV$ and $j\in\widetilde\cV$. Then,
	\begin{align}
		 & [\Phi(k,k-nb)]_{ij}                                                                                                                \\
		 & =\sum_{u=1}^{\widetilde n}[\Phi(k,k-nb+1)]_{iu}[\widetilde A(k-nb)]_{uj}                                                           \\
		 & \geq \frac{1}{n^{nb}}\sum_{u=1}^{\widetilde n}[\widetilde A(k-nb)]_{uj}\geq \frac{1}{n^{nb}}.\ \forall i\in\cV,j\in\widetilde \cV.
	\end{align}
	where the last inequality follows from the column-stochasticity of $\widetilde A(k)$. The desired result is obtained by induction.
\end{proof}

The main result of this section ensures that all nodes of the AsySPA eventually achieve consensus.
\begin{lemma}\label{lemma3}
	Under Assumptions \ref{assum}, \ref{assum3} and \ref{assum6}, let $\{z_i(k)\},i\in\cV$ be generated by the AsySPA.
	\begin{enumerate}[label=(\alph*)]
		\item For all $k\geq 1$, it holds that
		      \bee
			      \hspace{-10pt}|z_i(k+1)-\bar{x}(k)|\leq 8n^{nb}\left(\lambda^k\|\vec x(1)\|_1+\sum_{t=1}^{k}\lambda^{k-t}\|\vec g(t)\|_1\right)
		      \ene
		      where $\lambda<1$ is given in Lemma \ref{lemma2} \diff{and $\vec g(k)$ is defined in \eqref{eq_g}}. In particular, we have
		      \bee
			      \lim_{k\rightarrow\infty}|z_i(k+1)-\bar{x}(k)|=0,\ \forall i\in\cV.
		      \ene
		\item Moreover,
		      $
			      \sum_{t=1}^{\infty}\rho(t)|z_i(t+1)-\bar{x}(t)|<\infty,\ \forall i\in\cV.
		      $
	\end{enumerate}
\end{lemma}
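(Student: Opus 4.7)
The plan is to exploit the push-sum structure of the delay-free augmented system \eqref{eq1_sec2} combined with the geometric ergodicity supplied by Lemma \ref{lemma2}. Unrolling $\widetilde{\vec x}(k+1) = \widetilde A(k)\widetilde{\vec x}(k) - \vec g(k)$ yields the closed-form $\widetilde A(k)\widetilde{\vec x}(k) = \Phi(k,1)\widetilde{\vec x}(1) - \sum_{t=1}^{k-1}\Phi(k,t+1)\vec g(t)$ and $\widetilde y_i(k+1) = [\Phi(k,1)\widetilde{\vec y}(1)]_i$, while column-stochasticity of $\widetilde A(k)$ makes $\bone^\T\widetilde{\vec x}(k)$ evolve only through $\vec g(\cdot)$, giving $\bar x(k)$ an analogous telescoping form that I will compare entrywise against $[\widetilde A(k)\widetilde{\vec x}(k)]_i$ and $\widetilde y_i(k+1)$.

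To derive the inequality in (a), I would write $z_i(k+1) - \bar x(k) = \big([\widetilde A(k)\widetilde{\vec x}(k)]_i - \widetilde y_i(k+1)\bar x(k)\big)/\widetilde y_i(k+1)$, then in the numerator add and subtract the rank-one pieces $\phi_i(k)\bone^\T\widetilde{\vec x}(1)$, $\phi_i(k)\bone^\T\vec g(t)$ for each summand, and $\phi_i(k)\bone^\T\widetilde{\vec y}(1)\bar x(k)$. Lemma \ref{lemma2}(a) furnishes the entrywise bound $|[\Phi(k,s)-\phi(k)\bone^\T]_{ij}| \leq \|\Phi(k,s)-\phi(k)\bone^\T\|_1 \leq \alpha\lambda^{k-s}$, and because the \emph{same} $\phi(k)$ serves every start time $s$, the rank-one cross terms collapse through the telescoping identity for $\bar x(k)$, leaving the numerator dominated by $\alpha\lambda^{k-1}\|\vec x(1)\|_1 + \sum_{t=1}^{k-1}\alpha\lambda^{k-t-1}\|\vec g(t)\|_1$ (using $\|\widetilde{\vec x}(1)\|_1 = \|\vec x(1)\|_1$ since the virtual coordinates vanish at initialization). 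Dividing by the uniform lower bound $\widetilde y_i(k+1) \geq n^{-nb}$ from Lemma \ref{lemma2}(b) applied to $\widetilde{\vec y}(1)$, and absorbing numerical factors into the constant $8n^{nb}$, yields the stated quantitative inequality. The limit $|z_i(k+1) - \bar x(k)| \to 0$ then follows from $\|\vec g(t)\|_1 \to 0$---a consequence of Assumption \ref{assum}(b) together with Lemma \ref{lemma1}(c) capping $l_i(k+1)-l_i(k)$---and a standard split-sum argument that forces the convolution $\sum_{t=1}^{k}\lambda^{k-t}\|\vec g(t)\|_1$ to zero.

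For part (b), I would multiply the bound from (a) by $\rho(t)$, sum over $t$, and swap the double summation by Fubini: $\sum_t \rho(t)\sum_{s\leq t}\lambda^{t-s}\|\vec g(s)\|_1 = \sum_s \|\vec g(s)\|_1 \sum_{t\geq s}\rho(t)\lambda^{t-s} \leq (1-\lambda)^{-1}\sum_s \rho(s)\|\vec g(s)\|_1$, where monotonicity of $\rho$ is used in the last step. From $g_i(s) = \sum_{k=l_i(s)}^{l_i(s+1)-1}\rho(k)\nabla f_i(z_i(s+1))$ combined with decreasing $\rho$ and Lemma \ref{lemma1}(c), I would estimate $|g_i(s)| \leq c(nb+1)\rho(l_i(s))$; the lower bound $l_i(s) \geq s/(nb+1) - O(1)$ that one also extracts from Lemma \ref{lemma1}(c) then gives $\rho(l_i(s)) = O(\rho(s))$ for the usual stepsize families, so $\rho(s)\|\vec g(s)\|_1 = O(\rho(s)^2)$ and the sum converges by Assumption \ref{assum}(b). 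The remaining initial-condition term $\sum_t \rho(t)\lambda^t\|\vec x(1)\|_1$ is trivially finite since $\rho$ is bounded and $\lambda < 1$.

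The main technical obstacle I expect is the bookkeeping between the global index $k$ and the local logical counters $l_i(k)$: obtaining an estimate $|g_i(s)| \lesssim \rho(s)$ sharp enough to make $\sum_s \rho(s)\|\vec g(s)\|_1$ converge requires the two-sided control on $l_i$ from Lemma \ref{lemma1}(c) together with monotonicity of $\rho$, and any slack in those estimates must be reabsorbed into the prefactor $8n^{nb}$ in (a). A secondary but subtler point is that Lemma \ref{lemma2}(a) provides one and the same limit vector $\phi(k)$ uniformly in the start time $s$, which is precisely what enables the cross-term cancellation used above; had the limit been $\phi(k,s)$, the telescoping identity that drives the decomposition would break and extra contraction estimates would be required.
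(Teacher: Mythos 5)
Your route is the one the paper itself intends: the published proof is omitted with a pointer to Lemma 1 of \cite{nedic2015distributed}, and that is exactly the push-sum perturbation argument you reconstruct (unroll the augmented recursion, replace each $\Phi(k,s)$ by $\phi(k)\bone^\T$ plus a remainder controlled by Lemma \ref{lemma2}(a), and divide by the uniform lower bound $\widetilde y_i(k+1)\ge n^{-nb}$ from Lemma \ref{lemma2}(b)). Two concrete caveats. First, the cross-term cancellation you invoke requires $\bar x(k)=\bone^\T\widetilde{\vec x}(k)/n$, because $\bone^\T\widetilde{\vec y}(1)=n$; under the paper's displayed definition, which divides by $\widetilde n=n(b+1)$, the rank-one pieces do \emph{not} collapse and the statement would in fact be false. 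The paper's own proof of Theorem \ref{theo1} multiplies by $\bone^\T/n$, so you have silently adopted the correct normalization --- this should be said explicitly. Relatedly, chasing your constants gives a prefactor of order $\alpha n^{nb}/\lambda=4n(1+n^{nb})n^{nb}/\lambda\sim n^{2nb}$, which cannot be ``absorbed'' into $8n^{nb}$; the stated constant comes from the sharper entrywise estimates in \cite{nedic2015distributed}, not from Lemma \ref{lemma2}(a) as written. Neither point affects the limit claims or part (b).

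The one step that genuinely fails under the stated hypotheses is the estimate $\rho(l_i(s))=O(\rho(s))$ in part (b). Assumption \ref{assum}(b) only requires $\rho$ to be nonnegative, decreasing, non-summable and square-summable; one can construct admissible staircase stepsizes for which $\rho(\lceil s/(nb+1)\rceil)/\rho(s)$ is unbounded, so restricting to ``the usual stepsize families'' proves less than the lemma asserts. The repair is cheap and needs no comparison between $\rho(s)$ and $\rho(l_i(s))$: bound $\rho(s)\,|g_i(s)|\le (nb+1)c\,\rho(s)\rho(l_i(s))\le \tfrac{(nb+1)c}{2}\bigl(\rho(s)^2+\rho(l_i(s))^2\bigr)$, and observe from Lemma \ref{lemma1}(c) that $l_i$ can hold any given value for at most $nb+1$ consecutive global indices, whence $\sum_s\rho(l_i(s))^2\le(nb+1)\sum_{v\ge1}\rho(v)^2<\infty$. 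Summing over $i$ and inserting this into your Fubini exchange, part (b) goes through verbatim.
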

\begin{proof}
	The proof is similar to that in \cite[Lemma 1]{nedic2015distributed} and omitted for saving space.
\end{proof}

Lemma \ref{lemma3} shows that $z_i(k)$ generated by the AsySPA converges to $\bar x(k)$ for all $i$. Then, it is sufficient to prove that $\bar x(k)$ converges to an optimal solution of the DOP in \eqref{original}, which is the main focus of the remaining sections.

\diff{
\begin{remark}
	The convergence rate of $z_i(k)$ is a worst case bound, which has very bad scaling with respect to the number of nodes and  is unfortunately achievable, see \cite{nedic2016distributed} for the existence of such a graph.
\end{remark}}

\section{Generalized Subgradient Methods}\label{sec4}
This section proposes a generalized subgradient algorithm which unifies both subgradient methods and incremental subgradient methods \cite{{bertsekas2015convex}}.  Then, we study its convergence, the results of which not only have its independent significance but also are essential for proving the exact convergence of the AsySPA in the next section.

We still focus on the optimization problem in \eqref{original} where the decision vector $\vec x \in\bR^m $ is now multi-dimensional. Given two sequences $\cS=\{s(1),s(2),...\}\subseteq\cV$ and $\cR_i=\{r_i(1),r_i(2),...\}\subseteq\bN\cup\{0\}$ which satisfy that
\bee\label{eq1_sec4}
	r_i(k+1)=\left\{\begin{array}{ll}
		r_i(k)+\Delta r_i(k), & \text{if}~s(k)=i, \\
		r_i(k),               & \text{otherwise},
	\end{array}\right.
\ene
where $\Delta r_i(k)\in\bN$ for all $i\in\cV$.

Now, we propose the following {\em generalized subgradient method} to solve the optimization problem in \eqref{original}
\bee\label{ism}
	\vec x(k+1)=\vec x(k) -\left(\sum_{t=r_{s(k)}(k)+1}^{r_{s(k)}(k+1)}\rho(t)
	\right)\nabla f_{s(k)}(\vec x(k)+\varepsilon(k))
\ene
where  $\varepsilon(k)\in\bR^n$ denotes noise or computing errors. The novelty of \eqref{ism} lies in the use of adaptive stepsizes. By choosing different sequences $\cS$ and $\cR_i$, then  \eqref{ism} covers  a broad class of subgradient methods, e.g., the subgradient method and the incremental subgradient method as elaborated below.

\begin{example}[Incremental subgradient method] Let $\cS=\{1,2,...,n,1,2,...,n,...\}$, $\cR_i=\{\bzero_i^\T,\bone_n^\T,2\bone_n^\T,3\bone_n^\T,...\}$ for all $i$, and $\varepsilon(k)=0$ for all $k$. Then, \eqref{ism} is rewritten as
	\bee\label{ism2}
		\vec x(k+1)=\vec x(k) - \rho(t)\nabla f_p(\vec x(k)),
	\ene
	where $t=\lfloor (k-1)/n \rfloor+1$ and $p=k-(t-1)n$. Clearly, \eqref{ism2} reduces to the cyclic incremental subgradient method (cf. \cite[Section 2.1]{bertsekas2015convex}). The randomized incremental subgradient method can also be derived in a similar way.\qed
\end{example}
\begin{example}[Subgradient method] Let $\cS$ and $\cR_i$ be the same as in Example 1 but
	\bee\label{eq2_sec4}
		\varepsilon(k)=\vec x(n\lfloor \frac{k}{n} \rfloor+1 )-\vec x(k)
	\ene
	for all $k$. It follows from \eqref{ism} that
	\bee
		\|\varepsilon(k)\|\leq n\rho(\lfloor \frac{k-1}{n} \rfloor+1)\max_{i}\|\nabla f_i(\vec x(n\lfloor \frac{k}{n} \rfloor+1))\|.
	\ene

	Under Assumption \ref{assum}, it implies that $\lim_{k\rightarrow\infty}\|\varepsilon(k)\|=0$. Substituting \eqref{eq2_sec4} into \eqref{ism} leads to that
	\bee\label{gd}
		\vec x(k+n)=\vec x(k) - \rho(t)\nabla f(\vec x(k)),
	\ene
	where $\nabla f(\vec x(k))=[\nabla f_1(\vec x(k)),...,\nabla f_n(\vec x(k))]^\T$ and $t=\lfloor (k-1)/n \rfloor+1$. Let $\vec y(k)=\vec x(kn-n+1)$, then \eqref{gd} can be written as
	\bee\label{gd2}
		\vec y(k+1)=\vec y(k) - \rho(k)\nabla f(\vec y(k)).
	\ene
Then, it is a  subgradient method with stepsize $\rho(k)$.\qed
\end{example}

Now, we study the convergence of \eqref{ism} under the following assumption on the sequences $\cS$, $\cR_i, i\in\cV$ and $\{\varepsilon(k)\}$.
\begin{assum}\label{assum5}
	\begin{enumerate}[label=(\alph*)]
		\item There exists a $\sigma_1\in\bN$ such that $\cV\subseteq\{s(k+1),...,s(k+\sigma_1)\}$ for all $k\in\bN$.
		\item There exists a $\sigma_2\in\bN$ such that $|r_i(k)-r_j(k)|\leq \sigma_2$ and $\Delta r_i(k)\leq \sigma_2$ for all $i,j\in\cV$ and $k\in\bN$.
		\item $\sum_{k=1}^{\infty}\rho(k)\|\varepsilon(k)\|<\infty$.
	\end{enumerate}
\end{assum}

Assumption \ref{assum5}(a) guarantees that each function $f_i(x)$ in \eqref{original} needs  to be updated at least once in every $\sigma_1$ steps. Otherwise, $\vec x(k)$ will be significantly affected by some local function(s), and cannot  converge to an optimal solution.  Assumption \ref{assum5}(b) ensures that the stepsizes for updating each function $f_i(x)$ should not be  overwhelmingly unbalanced. Assumption \ref{assum5}(c) requires the cumulative noise to be bounded. Overall, Assumption \ref{assum5} is reasonable and even necessary in some cases. In particular, one can easily find counterexamples where $\{\vec x(k)\}$ in \eqref{ism} diverges  if any condition in Assumption \ref{assum5} is violated. Note that $\cS$, $\cR_i, i\in\cV$ and $\{\varepsilon(k)\}$ in Examples 1-2 satisfy Assumption \ref{assum5} if Assumption \ref{assum} holds.
\begin{theo}\label{prop2}
	Suppose that Assumptions \ref{assum} and \ref{assum5} hold. Then, $\{\vec x(k)\}$ generated by \eqref{ism} converges to an optimal solution of the problem \eqref{original}.
\end{theo}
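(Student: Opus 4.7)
The plan is to run a deterministic Lyapunov-type argument on $V(k):=\|\vec x(k)-\vec x^\star\|^2$, where $\vec x^\star$ is any optimal solution of \eqref{original}, and to combine a one-step inequality with an aggregation over windows of length $\sigma_1$ so that every local function is touched at least once by Assumption \ref{assum5}(a). Throughout, I would denote the effective stepsize $\eta(k):=\sum_{t=r_{s(k)}(k)+1}^{r_{s(k)}(k+1)}\rho(t)$. Assumption \ref{assum5}(b) together with the monotonicity of $\rho$ in Assumption \ref{assum}(b) yields $\eta(k)\le \sigma_2\,\rho(r_{s(k)}(k)+1)$, from which one should inherit $\sum_k\eta(k)=\infty$ and $\sum_k\eta(k)^2<\infty$; these are the engines of the whole argument.

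First, I would derive the one-step recursion directly from \eqref{ism}. Expanding $V(k+1)$ and applying convexity of $f_{s(k)}$ at the perturbed point $\vec x(k)+\varepsilon(k)$, together with the bounded-subgradient bound \eqref{uppergrad}, gives
\[
V(k+1)\le V(k)-2\eta(k)\bigl(f_{s(k)}(\vec x(k)+\varepsilon(k))-f_{s(k)}(\vec x^\star)\bigr)+2c\,\eta(k)\|\varepsilon(k)\|+c^2\eta(k)^2,
\]
where Assumption \ref{assum5}(c) and the previous estimate on $\eta(k)$ make the last two terms summable in $k$.

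Next, I would aggregate this over a window $[k,k+\sigma_1)$. A telescoping drift bound $\|\vec x(k')-\vec x(k)\|\le c\sum_{j=k}^{k'-1}\eta(j)$ combined with the Lipschitz continuity of each $f_i$ lets me replace every $f_{s(k')}(\vec x(k')+\varepsilon(k'))$ by $f_{s(k')}(\vec x(k))$ modulo summable corrections. Defining the per-function window budget $\beta_i(k):=\sum_{k'\in[k,k+\sigma_1),\,s(k')=i}\eta(k')$, the main contribution becomes $\sum_{i\in\cV}\beta_i(k)\bigl(f_i(\vec x(k))-f_i(\vec x^\star)\bigr)$. Here Assumption \ref{assum5}(b) enters decisively: since $|r_i(k)-r_j(k)|\le\sigma_2$ at all times and $\rho$ is nonincreasing, the $\beta_i(k)$ coincide up to $O(\sigma_2\rho(\cdot))$, so $\min_i\beta_i(k)\ge\underline\beta(k)$ with $\underline\beta(k)$ comparable to $\eta(k)$. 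Bounding from below by $\underline\beta(k)(f(\vec x(k))-f(\vec x^\star))\ge 0$ and lumping the residuals into a summable nonnegative sequence $u(k)$, the window inequality reads
\[
V(k+\sigma_1)\le V(k)-2\underline\beta(k)\bigl(f(\vec x(k))-f(\vec x^\star)\bigr)+u(k),\qquad \sum_{k}u(k)<\infty.
\]

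Finally, a standard deterministic supermartingale-style lemma applied to the above, together with $\sum_k\underline\beta(k)=\infty$, forces $V(k)$ to converge and $\liminf_k\bigl(f(\vec x(k))-f(\vec x^\star)\bigr)=0$. Boundedness of $\{\vec x(k)\}$ (from convergence of $V$) and continuity of $f$ then produce an optimal cluster point $\vec x^{\star\star}$; rerunning the same window recursion with $\vec x^\star$ replaced by $\vec x^{\star\star}$ upgrades subsequential to full convergence $\vec x(k)\to \vec x^{\star\star}$. The step I expect to be the main obstacle is precisely the aggregation: establishing that the per-function budgets $\beta_i(k)$ are simultaneously comparable across $i$ and still summable to infinity. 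This is exactly where Assumption \ref{assum5}(b) is indispensable, because without balanced per-function stepsizes \eqref{ism} would solve a weighted version of \eqref{original} rather than \eqref{original} itself, which is the UUR pathology that motivates the AsySPA in the first place.
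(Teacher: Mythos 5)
Your overall architecture (the Lyapunov function $\|\vec x(k)-\vec x^\star\|^2$, aggregation over a window in which every $f_i$ is touched, and a deterministic supermartingale-style conclusion) is the same as the paper's, but there is a genuine gap at exactly the step you flag as the main obstacle. When you replace $\sum_{i}\beta_i(k)\bigl(f_i(\vec x(k))-f_i(\vec x^\star)\bigr)$ by $\underline\beta(k)\bigl(f(\vec x(k))-f(\vec x^\star)\bigr)$, the discarded term is $\sum_i\bigl(\beta_i(k)-\underline\beta(k)\bigr)\bigl(f_i(\vec x(k))-f_i(\vec x^\star)\bigr)$. The individual gaps $f_i(\vec x(k))-f_i(\vec x^\star)$ have no sign (only their sum is nonnegative), and by Assumptions \ref{assum}(c) and \ref{assum5}(b) this term is only of order $\rho(\cdot)\,\|\vec x(k)-\vec x^\star\|$. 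Since $\sum_k\rho(k)=\infty$, such a residual is \emph{not} summable even if $\{\vec x(k)\}$ is bounded, so it cannot be ``lumped into a summable nonnegative sequence $u(k)$,'' and the window inequality you write down does not follow as stated.

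The paper resolves this with a telescoping decomposition rather than a summability argument: with $\widetilde r(k)=\max_j r_j(k)+1$, it splits each per-function budget as $\sum_{t=r_i(k)+1}^{r_i(k+\sigma)}=\sum_{t=r_i(k)+1}^{\widetilde r(k)}+\sum_{t=\widetilde r(k)+1}^{\widetilde r(k+\sigma)}-\sum_{t=r_i(k+\sigma)+1}^{\widetilde r(k+\sigma)}$, so that the mismatch appears as $\delta(k)-\delta(k+\sigma)$ plus a genuinely summable correction $\widetilde\rho_5(k)$ coming from replacing $\vec x(k)$ by $\vec x(k+\sigma)$ in the last piece. The quantities $\delta(k)$ are only shown to be \emph{bounded}, not summable, but they telescope, so the partial sums of the error sequence stay bounded and the supermartingale-type argument still closes. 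To repair your proof you would need to reproduce this telescoping structure (or otherwise show that the mismatch term has uniformly bounded partial sums); the remainder of your outline --- the $\epsilon$-subgradient one-step recursion, the drift bound $\|\vec x(k')-\vec x(k)\|\le c\sum_j\eta(j)$, and the final convergence argument --- matches the paper's Steps 1, 2 and 4.
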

\begin{proof}
	See Appendix \ref{appendix_a}.
\end{proof}

\begin{remark}
	Due to the space limitation, we only consider diminishing stepsizes  in \eqref{ism}. As in the standard subgradient methods, constant stepsizes can only ensure that $\vec x(k)$ is eventually attracted to a neighborhood of an optimal solution.
\end{remark}

To evaluate the convergence rate of \eqref{ism}, define
\bee
	d(\vec x)=\inf_{\vec x^\star\in\vec \cX^\star}\|\vec x-\vec x^\star\|
\ene
and make the following assumption.
\begin{assum}[H\"{o}lder Error Bound \cite{johnstone2017faster}]\label{assum8}
	The objective function $f(\vec x)$ in \eqref{original} satisfies
	\bee
		f(\vec x) - f^\star\geq c_hd(\vec x)^{1/\theta},\ \forall \vec x\in\bR^n
	\ene
	for some $c_h>0$ and $\theta\in(0,1]$.
\end{assum}

The H\"{o}lder error bound characterizes the order of growth of $f(\vec x)$ around an optimal solution, and holds in many applications, see \cite{johnstone2017faster} for details.
\begin{theo}\label{prop3}Let $\{x(k)\}$ be generated by \eqref{ism} where $\rho(k)=k^{-\alpha}$.
	Suppose that Assumptions \ref{assum}(c) and  \ref{assum5} hold, and $\|\varepsilon(k)\|\leq c_{\varepsilon}\rho(k)$ for some $c_{\varepsilon}>0$. Denote $\sigma=\sigma_1+\sigma_2$, the following statements hold.
	\begin{enumerate}[label=(\alph*)]
		\item If $\alpha\in(0.5,1]$, we have
		      \bea
			      &\min_{1\leq t\leq k\sigma}f(\vec x(t))-f^\star\leq\\
			      &\quad\frac{4\sigma^3c}{s(k)}\left(\frac{2(c_{\varepsilon}+3\sigma^2c)}{2\alpha-1}+3\sigma^2cd(\vec x(1))^2\right)
		      \ena
		      where
		      \bee\label{conrate}
			      s({k})=\left\{\begin{array}{ll}\frac{1}{1-\alpha}(k^{1-\alpha}-1),&\text{ if }\alpha\in(0.5,1),\\ \ln(k),&\text{ if }\alpha=1.\end{array}\right.
		      \ene
		      If $\alpha=0.5$, we have
		      \bea
			      &\min_{1\leq t\leq k\sigma}f(\vec x(t))-f^\star\\
			      &\leq\frac{4\sigma^3c}{\sqrt{k}}\left(2(c_{\varepsilon}+3\sigma^2c)\ln(k)+3\sigma^2cd(\vec x(1))^2\right).
		      \ena
		\item Suppose that Assumption \ref{assum8} also holds with $\theta\in[0.5,1]$. Let $\alpha\in(0,1]$,  there exists some $0<c_0<\infty$ such that
		      \bee
			      d(\vec x(k))^2\leq c_0 k^{-2\theta\alpha}
		      \ene
		      where $c_0$ depends on $\sigma_1$ and $\sigma_2$ in Assumption \ref{assum5}.
	\end{enumerate}
\end{theo}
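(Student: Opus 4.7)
The strategy is to treat a \emph{block} of $\sigma=\sigma_1+\sigma_2$ consecutive iterations of \eqref{ism} as playing the role of a single step of vanilla subgradient descent on the full objective $f$. The Lyapunov function is $d(\vec x(k))^2$, and everything rests on a block-wise almost-descent inequality of the form
\[
d(\vec x(k_0+\sigma))^2\;\le\;d(\vec x(k_0))^2\;-\;2W_{k_0}\bigl[f(\vec x(k_0))-f^\star\bigr]\;+\;Q_{k_0},
\]
where $W_{k_0}$ aggregates the effective stepsizes $\tilde\rho_t:=\sum_{u=r_{s(t)}(t)+1}^{r_{s(t)}(t+1)}\rho(u)$ used in the block and $Q_{k_0}$ collects the squared-stepsize and noise errors.

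First I would derive a one-step inequality. Expanding $\|\vec x(k+1)-\vec x^\star\|^2$ via \eqref{ism}, applying \eqref{subgradient} to $f_{s(k)}$ at $\vec x(k)+\varepsilon(k)$, using $|\nabla f_i|\le c$, and Cauchy--Schwarz on the $\varepsilon$-induced cross-term yields
\[
\|\vec x(k+1)-\vec x^\star\|^2\;\le\;\|\vec x(k)-\vec x^\star\|^2-2\tilde\rho_k\bigl[f_{s(k)}(\vec x(k))-f_{s(k)}(\vec x^\star)\bigr]+O\!\bigl(\tilde\rho_k c\|\varepsilon(k)\|+\tilde\rho_k^2 c^2\bigr).
\]
Summing over $k=k_0,\ldots,k_0+\sigma-1$, the cross-term becomes $\sum_{i\in\cV}W_i(k_0)\bigl[f_i(\vec x(\tau_i))-f_i(\vec x^\star)\bigr]$, where $\tau_i$ is a step in the block with $s(\tau_i)=i$ (guaranteed by Assumption \ref{assum5}(a) because $\sigma\ge\sigma_1$) and $W_i(k_0)=\sum_{u=r_i(k_0)+1}^{r_i(k_0+\sigma)}\rho(u)$. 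Assumption \ref{assum5}(b) together with monotonicity of $\rho$ forces all $W_i(k_0)$ to lie within $O(\sigma_2\rho(k_0))$ of a common value $W_{k_0}$; simultaneously the within-block drift satisfies $\|\vec x(\tau_i)-\vec x(k_0)\|\le c\sum_{t=k_0}^{k_0+\sigma-1}\tilde\rho_t=O(\sigma^2\rho(k_0)c)$, so that each $|f_i(\vec x(\tau_i))-f_i(\vec x(k_0))|$ is absorbed into $Q_{k_0}$. These substitutions convert the weighted sum into $W_{k_0}[f(\vec x(k_0))-f^\star]$ with $Q_{k_0}=O\bigl(\sigma^3\rho(k_0)^2c^2+\sigma\rho(k_0)c\|\varepsilon(k_0)\|\bigr)$.

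With the block inequality in hand, for part~(a) I would telescope over blocks $m=1,\ldots,k$ to obtain $\sum_{m=1}^k W_m[f(\vec x_m)-f^\star]\le \tfrac12 d(\vec x(1))^2+\sum_m Q_m$. Using $\rho(k)=k^{-\alpha}$ and $\|\varepsilon(k)\|\le c_\varepsilon\rho(k)$, integral comparisons give $\sum_m W_m\asymp s(k)$ and $\sum_m Q_m=O\!\bigl(\sum_{t\le k\sigma}\rho(t)^2\bigr)$, which is $O(1/(2\alpha-1))$ when $\alpha\in(1/2,1]$ and $O(\ln k)$ at $\alpha=1/2$; dividing by $\sum_m W_m$ and using the elementary $\min_m\le\text{weighted-average}$ inequality then yields the stated $s(k)^{-1}$ bound with the explicit $\sigma^3$ and $\sigma^2$ factors coming from the constants in $W_{k_0}$ and $Q_{k_0}$. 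For part~(b) I would plug the H\"older bound $f(\vec x_m)-f^\star\ge c_h d(\vec x_m)^{1/\theta}$ directly into the block inequality to obtain the scalar recursion $d_{m+1}^2\le d_m^2-c_1\rho_m d_m^{1/\theta}+c_2\rho_m^2$ with $\rho_m=m^{-\alpha}$; a Chung-type induction then gives $d(\vec x(k))^2=O(k^{-2\theta\alpha})$, interpolating between the anchor cases $\theta=1/2$ (geometric contraction in $d^2$ leading to $d^2=O(\rho_m)=O(m^{-\alpha})$) and $\theta=1$ (``sharp growth'' yielding $d\sim\rho_m$, hence $d^2=O(m^{-2\alpha})$).

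The main obstacle is the block-aggregation step. Simultaneously controlling (i) the mismatch between the $W_i(k_0)$ via Assumption \ref{assum5}(b), (ii) the drift of $\vec x(t)$ within a block via the bounded subgradient, and (iii) the noise $\varepsilon$, while tracking the constants carefully enough to match the explicit $4\sigma^3 c$ and $3\sigma^2 c$ coefficients in the statement, is by far the most delicate piece of bookkeeping; everything else is either a routine one-step convexity computation or a standard sequence-analysis/Chung-lemma argument once the block inequality is in hand.
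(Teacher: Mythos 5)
Your overall architecture coincides with the paper's: both proofs rest on a block descent inequality over windows of $\sigma$ consecutive iterations (the paper's \eqref{eq2_prop2}, established in the proof of Theorem \ref{prop2}), followed by telescoping and the ``$\min\le$ weighted average'' argument for part (a), and by substituting the H\"older bound into the block inequality to obtain a Chung-type scalar recursion for part (b) (the paper invokes Theorem 6 of \cite{johnstone2017faster} for exactly the recursion you write down). There is, however, one genuine gap in your block-aggregation step, and it sits precisely at the point you yourself flag as the delicate one.

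You claim that replacing the per-function weights $W_i(k_0)=\sum_{t=r_i(k_0)+1}^{r_i(k_0+\sigma)}\rho(t)$ by a common value $W_{k_0}$ produces an error absorbable into $Q_{k_0}=O(\sigma^3\rho(k_0)^2c^2+\sigma\rho(k_0)c\|\varepsilon(k_0)\|)$. It is not. The substitution error is $\sum_{i}(W_i(k_0)-W_{k_0})\bigl(f_i(\vec x(k_0))-f_i(\vec x^\star)\bigr)$; Assumption \ref{assum5}(b) does give $|W_i(k_0)-W_{k_0}|=O(\sigma_2\rho(\cdot))$, but the factors $f_i(\vec x(k_0))-f_i(\vec x^\star)$ are not sign-definite (only their sum over $i$ is nonnegative) and are bounded only by $c\,d(\vec x(k_0))$, so the per-block error is $O(\rho(k_0)\,d(\vec x(k_0)))$, not $O(\rho(k_0)^2)$. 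Summed over blocks this is of order $\sup_m d(\vec x_m)\sum_m\rho_m$, i.e., comparable to $s(k)$ itself; after dividing by $\sum_m W_m\asymp s(k)$ a nondecaying constant remains and the rate is lost. The paper's fix is structural rather than bookkeeping: taking $W_{k_0}=\sum_{t=\widetilde r(k_0)+1}^{\widetilde r(k_0+\sigma)}\rho(t)$ with $\widetilde r(k)=\max_j r_j(k)+1$ makes the mismatch split as a telescoping difference $\delta(k_0)-\delta(k_0+\sigma)$ of boundary terms (plus an $O(\rho^2)$ drift correction $\widetilde\rho_5$), so that after summing over blocks only the two endpoint terms survive, each controlled via $\rho(k)\sum_{t\le k}\rho(t)\le\sum_{t\le k}\rho(t)^2$. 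The same issue resurfaces in your part (b): the clean recursion $d_{m+1}^2\le d_m^2-c_1\rho_m d_m^{1/\theta}+c_2\rho_m^2$ is not what one obtains for $d^2$ itself; the paper runs the Chung argument on the shifted quantity $z(k)=d(\vec y(k))^2-2\delta(k\sigma+1)$, with a case split on whether $d(\vec y(k))\le2\sigma^2c\rho(k)$, and only afterwards converts back using $\delta(k\sigma+1)=O(d(\vec y(k))k^{-\alpha})$ and $2\theta\alpha\le2\alpha$. Once this telescoping device is inserted, the remainder of your plan goes through and reproduces the stated constants.
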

\begin{proof}
	See Appendix \ref{appendix_b}.
\end{proof}

Theorem \ref{prop3}(a) reveals that $f(\vec{x}^k)$ in \eqref{ism} \diff{can achieve a convergence rate of $O(\ln(k)/\sqrt{k})$, which is consistent with standard subgradient methods \cite[Section 3.2.3]{nesterov2013introductory}}. Theorem \ref{prop3}(b) cannot be directly obtained by using Theorem \ref{prop3}(a), which only implies that $d(\vec x^k)^2$ converges at the best rate of $O(k^{-\theta})$. However, Theorem \ref{prop3}(b) dictates that $d(\vec x^k)^2$ can actually converge at a rate of $O(k^{-2\theta})$ by using stepsizes $\rho(k)=1/k$.

\section{Exact Convergence of the AsySPA}\label{sec5}

In this section, we prove the exact convergence and evaluate the convergence rate of the AsySPA.
\begin{theo}\label{theo1}
	Suppose that Assumptions \ref{assum}-\ref{assum6} hold. Then, each node of the AsySPA asymptotically converges to the same optimal solution $x^\star\in\cX^\star$, i.e., $\lim_{k\rainfty}z_i(k)=x^\star$ for all $i\in\cV$ where $z_i(k)$ is defined in \eqref{eq1_sec2}.
\end{theo}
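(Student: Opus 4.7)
The plan is to reduce the theorem to an application of Theorem \ref{prop2} on generalized subgradient methods, via the consensus result in Lemma \ref{lemma3}. Since Lemma \ref{lemma3}(a) already gives $|z_i(k+1)-\bar x(k)|\rightarrow 0$ for every $i\in\cV$, it suffices to prove that the averaged state $\bar x(k)=\bone^\T\widetilde{\vec x}(k)/(n(b+1))$ itself converges to some $x^\star\in\cX^\star$; the triangle inequality then yields $z_i(k)\rightarrow x^\star$.

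To obtain a recursion for $\bar x(k)$, I would left-multiply $\widetilde{\vec x}(k+1)=\widetilde A(k)\widetilde{\vec x}(k)-\vec g(k)$ by $\bone^\T/(n(b+1))$ and invoke the column-stochasticity of $\widetilde A(k)$, obtaining
\bee
\bar x(k+1)=\bar x(k)-\frac{1}{n(b+1)}\sum_{i:\,t(k)\in\cT_i}\alpha_i(k)\,\nabla f_i(z_i(k+1)),
\ene
with $\alpha_i(k)=\sum_{l=l_i(k)}^{l_i(k+1)-1}\rho(l)$. When more than one node is activated at a single $t(k)$, I would refine the global iteration index by splitting that instant into consecutive sub-iterations, one per activated node. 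The resulting single-activation sequence fits the template \eqref{ism} with effective stepsize $\tilde\rho(l)=\rho(l)/(n(b+1))$, index shift $r_i(\cdot)=l_i(\cdot)-1$, and per-iteration noise $\varepsilon(k)=z_{s(k)}(\cdot)-\bar x(\cdot)$.

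I would then verify Assumption \ref{assum5} for the reformulated recursion. Item (a) follows from Lemma \ref{lemma1}(a), which bounds the gap between consecutive activations of any node by $b_1$ steps of $\cT$, hence by a uniformly bounded number of refined sub-iterations. Item (b) follows directly from Lemma \ref{lemma1}(c), namely $|l_i(k)-l_j(k)|\le nb$ and $l_i(k+1)-l_i(k)\le nb+1$. Item (c) follows from Lemma \ref{lemma3}(b): after refinement each original $k$ spawns at most $n$ sub-iterations, so $\sum_{j}\tilde\rho(j)\,|\varepsilon(j)|$ is dominated by a finite multiple of $\sum_{k}\rho(k)\max_{i\in\cV}|z_i(k+1)-\bar x(k)|$, which is finite. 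Since Assumption \ref{assum}(b) is invariant under positive scaling, $\tilde\rho$ inherits its stepsize conditions from $\rho$. Theorem \ref{prop2} then yields $\bar x(k)\rightarrow x^\star$ for some $x^\star\in\cX^\star$, completing the argument.

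The main obstacle I expect is the careful bookkeeping in the refinement step: one must verify that splitting a simultaneous-activation instant into sub-iterations preserves both the algebraic form of the recursion for $\bar x$ and the structural inequalities from Lemma \ref{lemma1}(c) after the counters $l_i$ are re-indexed along the refined sequence. A secondary technicality is the identification $r_i=l_i-1$; one must check that the generalized-subgradient chunk $\sum_{t=r_i(k)+1}^{r_i(k+1)}\rho(t)$ really coincides with $\alpha_i(k)$ defined via $l_i$ in the AsySPA. The remaining pieces are direct matchings between the AsySPA update and the template \eqref{ism}.
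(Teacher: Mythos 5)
Your proposal follows essentially the same route as the paper's proof: reduce to convergence of $\bar x(k)$ via Lemma \ref{lemma3}, left-multiply \eqref{eq1_sec2} by the all-ones vector to obtain the averaged recursion, split instants with multiple simultaneous activations into consecutive single-activation sub-iterations, and verify Assumption \ref{assum5} (via Lemma \ref{lemma1}) so that Theorem \ref{prop2} applies. The bookkeeping you flag as the main obstacle is exactly what the paper handles in its ``Case 2,'' where the sub-iteration noise is augmented to $\epsilon_u(k)=\epsilon(k)+\bar x(k)-x^{(u-1)}$ and its weighted summability is checked.
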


\begin{proof}
	In view of Lemma \ref{lemma3}, it is sufficient to show that $\lim_{k\rainfty}\bar x(k)=x^\star$ and follows that
	\bee
		z_i(k+1)=\bar x(k)+\epsilon_i(k)
	\ene
	where $\lim_{k\rightarrow\infty}\epsilon_i(k)=0$ and $\sum_{k=1}^{\infty}\rho(k)|\epsilon_i(k)|<\infty$.

	By left-multiplying $\bone^\T/n$ on both sides of \eqref{eq1_sec2}, it implies
	\bee\label{eq1_thm1}
		\bar x(k+1)=\bar x(k)-\frac{1}{n}\sum_{i\in\cA(k)}\sum_{t=l_i(k)+1}^{l_i(k+1)}\rho(t)\nabla f_i(\bar x(k)+\epsilon_i(k))
	\ene
	where $\cA(k)$ is the set of activated nodes at time $t(k)$, i.e., $\cA(k)=\{i|t(k)\in\cT_i, i\in\cV\}$. Then, two cases are considered.

		{\em Case 1}:~ If $\cA(k)$ is a singleton, it follows from Theorem \ref{prop2} that $\bar x(k)$ converges to some optimal solution of the DOP in \eqref{original} where Assumption \ref{assum5} holds by using Lemma \ref{lemma1}.

	{\em Case 2}:~ If $\cA(k)$ includes multiple elements, say $i_1,...,i_v$, where $v=|\cA(k)|\leq n$, then we can incrementally compute \eqref{eq1_thm1} via $v$ iterates. Specifically, let $x^{(0)}=\bar{x}(k)$ and $x^{(v)}=\bar{x}(k+1)$, we can rewrite \eqref{eq1_thm1} as
	\bee\label{eq2_thm1}
		x^{(u)}=x^{(u-1)}-\sum_{t=l_{i_u}(k)+1}^{l_{i_u}(k+1)}\frac{\rho(t)}{n}\nabla f_{i_u}(x^{(u-1)}+\epsilon_u(k))
	\ene
	for all $u=\{1,...,v\}$, where $\epsilon_u(k)=\epsilon(k)+\bar x(k)-x^{(u-1)}$. It can be readily verified that $\sum_{k=1}^{\infty}\rho(k)|\epsilon_u(k)|<\infty$, and hence Assumption \ref{assum5} is satisfied. The desired result follows from Theorem \ref{prop2} again.
\end{proof}

Next, we evaluate the convergence rate of the AsySPA under stepsizes $\rho(k)=k^{-\alpha},\alpha\in[0.5,1]$.
\begin{theo}\label{theo2}
	Suppose Assumptions \ref{assum}-\ref{assum6} hold. Let $\rho(k)=k^{-\alpha}$ in the AsySPA and $z_i(k)$ be given in \eqref{eq1_sec2}. The following statements hold.
	\begin{enumerate}[label=(\alph*)]
		\item If $\alpha\in(0.5,1]$ and the initial condition $x_i(0)=0,\forall i\in\cV$,  then
		      \bea\label{eq1_thm2}
			      &\min_{1\leq t\leq bk}f(z_i(t))-f^\star\\ &\quad\leq\frac{4b^3c}{s(k)}\left(\frac{2(c_{\epsilon}+3b^2c)}{2\alpha-1}+3b^2cd(0)^2\right), \forall i\in\cV,
		      \ena
		      where $s(k)$ is defined in \eqref{conrate} and $c_{\epsilon}$ is defined in \eqref{eq_ce}. Moreover, if $\alpha=0.5$, then
		      \bea
			      &\min_{1\leq t\leq bk}f(z_i(t))-f^\star\\
			      &\quad\leq\frac{\ln(k)}{\sqrt{k}}\left(8b^3cc_{\epsilon}+12b^5c^2(d(0)^2+2)\right), \forall i\in\cV.
		      \ena
		\item Suppose that $f_i(x),i\in\cV$ also satisfies Assumption \ref{assum8} with $\theta\in[0.5,1]$. There exists some $c_b\in(0,\infty)$ such that
		      \bee
			      d(z_i(k))^2\leq c_b k^{-2\theta\alpha},\ \forall i\in\cV
		      \ene
		      where $c_b$ depends on the parameters $\bar{\tau},\underline{\tau},\tau$ and $c$ is defined in Assumptions \ref{assum}.
	\end{enumerate}
\end{theo}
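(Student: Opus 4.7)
The strategy is to realize the AsySPA's averaged trajectory $\bar{x}(k)$ as a particular instance of the generalized subgradient iteration \eqref{ism}, invoke Theorem \ref{prop3} to bound $f(\bar{x}(k))-f^\star$ and $d(\bar{x}(k))^2$, and then transfer these bounds to each $z_i(k)$ using the exponentially decaying consensus error in Lemma \ref{lemma3}.

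\textbf{Step 1 (reformulation and Assumption \ref{assum5}).} Starting from \eqref{eq1_thm1} and splitting any step with $|\cA(k)|>1$ into $|\cA(k)|$ consecutive sub-steps as in Case 2 of the proof of Theorem \ref{theo1}, I would write $\bar{x}$ as a sequence $\{\bar{x}^{(\ell)}\}$ of length at most $n k$ after $k$ global activations, evolving exactly as \eqref{ism} with index set $s(\ell)\in\cA(k)$, stepsize aggregate $\Delta r_{s(\ell)}(\ell)=l_{s(\ell)}(k+1)-l_{s(\ell)}(k)$, and noise $\varepsilon(\ell)=\epsilon_{s(\ell)}(k)=z_{s(\ell)}(k+1)-\bar{x}(k)$. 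Lemma \ref{lemma1}(a) gives Assumption \ref{assum5}(a) with $\sigma_1\le n b_1$; Lemma \ref{lemma1}(c) gives Assumption \ref{assum5}(b) with $\sigma_2\le nb+1$. Both can be absorbed into the constant $b$, matching the $b^3$ and $b^5$ factors in the target inequalities.

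\textbf{Step 2 (noise bound $\|\varepsilon(\ell)\|\le c_\epsilon\rho(\ell)$).} From Lemma \ref{lemma1}(c), each non-zero block of $\vec{g}(t)$ in \eqref{eq_g} contains at most $nb+1$ terms $\rho(s)$ with $s\ge t-nb$, so Assumption \ref{assum}(c) gives $\|\vec{g}(t)\|_1\le n(nb+1)c\,\rho(\max\{1,t-nb\})$. Plugging this into Lemma \ref{lemma3}(a) and using the initialization $x_i(0)=0$, I would bound
\[
|\epsilon_i(k)|\ \le\ 8n^{nb}\sum_{t=1}^{k}\lambda^{k-t}\,n(nb+1)c\,\rho(\max\{1,t-nb\}).
\]
A standard convolution estimate (split at $t=k/2$: on $t\le k/2$ use geometric decay of $\lambda^{k-t}$; on $t>k/2$ use monotonicity $\rho(t-nb)\le C_\alpha \rho(k)$ for $\rho(k)=k^{-\alpha}$) then produces a constant $c_\epsilon<\infty$ depending only on $\alpha$, $n$, $b$, $c$, and $\lambda$ such that $|\epsilon_i(k)|\le c_\epsilon\rho(k)$ for all $k$, which is precisely the hypothesis of Theorem \ref{prop3}.

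\textbf{Step 3 (transferring to $z_i$).} Theorem \ref{prop3}(a) applied to $\bar{x}^{(\ell)}$ yields $\min_{\ell\le bk} f(\bar{x}^{(\ell)})-f^\star$ bounded by the right-hand side of \eqref{eq1_thm2}. Using \eqref{subgradient} and $|\nabla f_j|\le c$, one has $|f(z_i(k+1))-f(\bar{x}(k))|\le nc\,|z_i(k+1)-\bar{x}(k)|\le nc\, c_\epsilon\rho(k)$, which is of strictly smaller order than the main bound and can be absorbed into the stated constants, giving (a). For (b), Theorem \ref{prop3}(b) gives $d(\bar{x}(k))^2\le c_0 k^{-2\theta\alpha}$, and by the triangle inequality $d(z_i(k+1))\le d(\bar{x}(k))+|z_i(k+1)-\bar{x}(k)|$; since the second term is $O(k^{-\alpha})$ with $\alpha\ge\theta\alpha$ (because $\theta\le 1$), squaring yields $d(z_i(k))^2\le c_b k^{-2\theta\alpha}$ as claimed.

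\textbf{Main obstacle.} The technically delicate step is the convolution estimate in Step~2: producing an explicit, $k$-uniform constant $c_\epsilon$ for the inequality $\sum_{t=1}^{k}\lambda^{k-t}\rho(t-nb)\le c_\epsilon\rho(k)$ with $\rho(k)=k^{-\alpha}$, and tracking how $c_\epsilon$ depends on $\bar\tau,\underline\tau,\tau$ through $b$ and $\lambda$. A secondary bookkeeping issue is aligning the global-time index $k$ (counting activations across the network) with the split-iteration index $\ell$ used in \eqref{ism}, so that the factor $b$ in the ``$\min_{1\le t\le bk}$'' range emerges from Assumption \ref{assum5}'s $\sigma=\sigma_1+\sigma_2$ rather than getting inflated by loose estimates.
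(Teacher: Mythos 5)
Your proposal follows essentially the same route as the paper's proof: cast the averaged iterate \eqref{eq1_thm1} (split into sub-steps when several nodes activate simultaneously, as in Case~2 of Theorem~\ref{theo1}) as an instance of \eqref{ism}, verify Assumption~\ref{assum5} via Lemma~\ref{lemma1}, bound the consensus error $|z_i(k+1)-\bar{x}(k)|\le c_\epsilon\rho(k)$ using Lemma~\ref{lemma3} and $x_i(0)=0$, and then invoke Theorem~\ref{prop3}(a)--(b) together with $|f(z_i(t+1))-f(\bar{x}(t))|\le c\,|z_i(t+1)-\bar{x}(t)|$ to transfer the rates to each $z_i$. Your Step~2 convolution estimate (splitting the sum at $t=k/2$) is if anything more carefully justified than the paper's one-line bound of $\sum_{t=1}^{k}\lambda^{k-t}\|\vec g(t)\|_1$, but the argument and conclusions coincide.
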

\diff{
\begin{remark}
	As in Theorem \ref{theo1}, the global index $k$ is adopted to evaluate convergence rate. This is a common and natural way to study asynchronous algorithms, see e.g. \cite{assran2018asynchronous,tian2018asy,bertsekas1989parallel}. Note that the convergence rate is evaluated under the worst case (c.f. Lemma \ref{lemma1}), and in practice it may be much faster if the unevenness of update rates and delays are relatively small. One may consider to evaluate the convergence rate in terms of the running time, the maximum number of local iterations among nodes, and etc. For synchronous distributed algorithms, a lower bound on the number of communication rounds and gradient computations has been studied in \cite{uribe2018dual}.
	
	In view of Theorem \ref{theo2}(a) and \cite{nedic2015distributed}, it is interesting that the AsySPA essentially does not lead to any rate reduction of convergence with respect to the index $k$.  In the AsySPA, $k$ is increased by one no matter on which node has completed an update. While in the SynSPA, $k$ is increased by one only if {\em all} nodes have completed their updates, as shown in Fig. \ref{fig3}. That is, the SynSPA will be delayed by the slowest node. From this perspective, the computation time for increasing $k$ in the SynSPA should be longer than that of the AsySPA. In fact, the larger  the asynchrony among computing nodes, the more significant improvement of the AsySPA over the SynSPA. This is also confirmed via simulation in Section \ref{sec6}.
\end{remark}}

{\em Proof of Theorem \ref{theo2}}: The proof of part (a) lies in the use of Theorem \ref{prop3} on \eqref{eq1_thm1} or \eqref{eq2_thm1}. We first show that there exists some $c_{\epsilon}\in(0,\infty)$ such that $|z_i(k+1)-\bar{x}(k)|\leq c_{\epsilon}\rho(k)$ for all $k$. In view of Lemma \ref{lemma3} and $x_i(0)=0$, we have
\bea\label{eq4_thm2}
	&|z_i(k+1)-\bar{x}(k)|\\
	&\leq8n^{nb}\left(\lambda^k\|\vec x(1)\|_1+\sum_{t=1}^{k}\lambda^{k-t}\|\vec g(t)\|_1\right)\\
	&\leq8n^{nb}\left(\lambda^k\|\vec x(1)\|_1+\frac{ncb^{\alpha}}{1-\lambda}k^{-\alpha}\right)\leq c_{\epsilon}\rho(k)
\ena
where
\bee\label{eq_ce}
	c_{\epsilon}=\frac{8n^{nb+1}cb^{\alpha}}{1-\lambda},
\ene
and the second inequality follows from
\bea
	\sum_{t=1}^{k}\lambda^{k-t}\|\vec g(t)\|_1&\leq\sum_{t=1}^{k}\lambda^{k-t}nc\rho(\lfloor \frac{k}{b}\rfloor)\\
	&\leq ncb^{\alpha}\sum_{t=1}^{k}\lambda^{k-t}k^{-\alpha}\leq \frac{ncb^{\alpha}}{1-\lambda}k^{-\alpha}.
\ena

By Theorem \ref{prop3} and \eqref{eq1_thm1}, we obtain
\bea\label{eq3_thm2}
	&\min_{1\leq t\leq kb}f(\bar{x}(t))-f^\star\\
	&\quad\leq\frac{4b^3c}{s(k)}\left(\frac{2(c_{\epsilon}+3b^2c)}{2\alpha-1}+3b^2cd(\vec x(1))^2\right).
\ena

Moreover, $f(z_i(t+1))-f(\bar{x}(t))\leq |\nabla f(z_i(t+1))|\cdot |z_i(t+1)-\bar{x}(t)|\le c |z_i(t+1)-\bar{x}(t)|$. Jointly with \eqref{eq4_thm2} and \eqref{eq3_thm2}, the desired result follows as $\rho(k)$ decreases faster than $1/s(k)$ if $\alpha\in(0.5,1]$.
Part (b) can be established by using Theorem \ref{prop3}(b) in a similar way. \qed

Theorem \ref{theo2}(b) provides a better convergence rate of the AsySPA if each local objective function further satisfies the H\"{o}lder error bound condition. \diff{However, the AsySPA usually cannot achieve linear rate even if $f_i(x)$ is strongly convex with Lipschize continuous gradient. One may expect that the convergence rate of the AsySPA is lower and upper bounded by two SynSPAs. The worst case corresponds to the SynSPA in which each node updates in the same rate as that of the slowest node, and the best case is the SynSPA in which each node updates in the same rate as that of the fastest node. Since the SynSPA is unable to achieve linear rate, the AsySPA cannot neither. In our recent work \cite{zhang2019asynchronous},  we show how to achieve linear rate by substantially revising the AsySPA. }

\section{Numerical Examples}\label{sec6}
In this section we numerically illustrate the performance of the AsySPA in distributedly training a multi-class logistic regression classifier on the \emph{Covertype} dataset, which is available from the \emph{UCI machine learning repository} \cite{Dua2017UCI} and has also been used in \cite{assran2018asynchronous}. \diff{Moreover, we further test the algorithms on a support vector machine (SVM) training problem in Section \ref{sec6e}}.

The objective is to solve the optimization problem
\bee\label{lr}
	\minimize_{X}\ \diff{-}\sum_{i=1}^{n_s}\sum_{j=1}^{n_c}l_j^i\log\left(\frac{\exp(\vec x_j^\T\vec s^i)}{\sum_{j'=1}^{n_c}\exp(\vec x_{j'}^\T\vec s^i)}\right)+\frac{\gamma}{2}\|X\|_F^\diff{2}
\ene
where $n_s=581012$ is the number of training instances, $n_c=7$ is the number of classes, $n_f=55$ is the number of features, $\vec s^i\in\bR^{55}$ is the feature vector of the $i$-th instance, $\vec{l}^i=[l_1^i,...,l_7^i]^\T$ is the label vector of the $i$-th instance using the one-hot encoding, $X=[\vec x_1,...,\vec x_7]\in\bR^{n_f\times n_c}$ is the weighting matrix to be optimized, $\gamma=1$ is a regularization factor, and $\|\cdot\|_F$ denotes the Frobenius norm. As in \cite{assran2018asynchronous}, we normalize non-categorical features by subtracting the mean and dividing by the standard deviation.

The problem in \eqref{lr} is strictly convex. To validate the AsySPA, we adopt the Message Passing Interface (MPI) on a server to simulate a network with multiple connected nodes. Specifically, the MPI uses $n$ cores (nodes) to form a graph $\cG$, and the communication among cores is configured to be directed, synchronous, or asynchronous. Each core is exclusively assigned a subset of the dataset with $n_s/n$ training instances, and hence only has a local objective function. We perform several experiments, and the source codes in Python are available on the Internet\footnote{https://github.com/jiaqi61/asyspa}.

We use the averaged training error, i.e., $\frac{1}{n_s}\left(f(\bar X(t))-\widehat f^\star\right)$  to  compare distributed algorithms under different scenarios, where $\bar X(t)$ is the mean of all nodes' states at time $t$ and $\widehat f^\star$ is an approximated optimal value of the problem \eqref{lr} that is obtained from a centralized gradient descent method with sufficiently many iterations.

\begin{figure}[!t]
	\centering
	\subfloat[]{\label{fig5a}\includegraphics[width=0.3\linewidth]{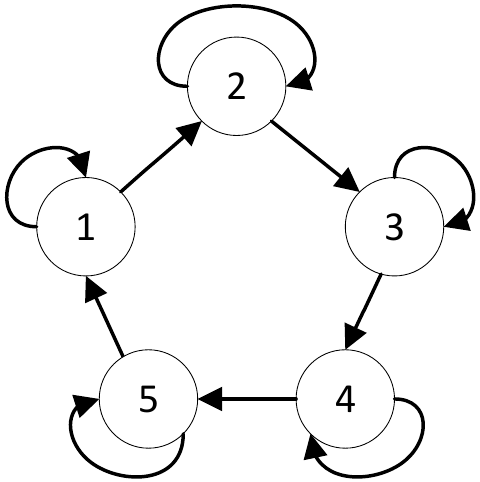}}\hspace{20pt}
	\subfloat[]{\label{fig5b}\includegraphics[width=0.3\linewidth]{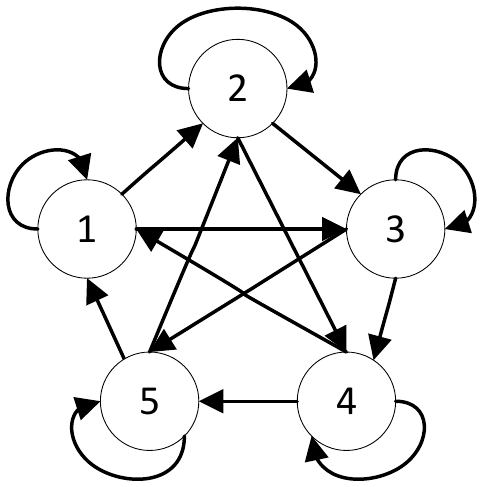}}
	\caption{Communication digraphs.}
	\label{fig5}
\end{figure}

\subsection{Comparison with centralized gradient methods}\label{sec6a}
We first compare the performance of the AsySPA with the  centralized gradient method.

The centralized algorithm is implemented on a single core while the AsySPA is tested over $n=3,6,9,12,16,20,24$ core(s) with a digraph in Fig. \ref{fig5}. Each core computes a new update by using \eqref{eq_asyspa} in Algorithm \ref{alg_asyspa} immediately after it finishes its current update. To accelerate computation, we set $\rho(k)=0.1/n_s$ in both centralized and distributed cases, rather than the diminishing stepsizes in Assumption \ref{assum}(b).

Fig. \ref{fig14} depicts how the averaged training error decreases in the computing time. Clearly, the AsySPA outperforms the centralized algorithm in all experiments and the improvement is more significant if $n$ is larger. This is further illustrated in Fig. \ref{fig15}, where we compute the speedup defined by $T_1/T_n$, and $T_n$ is the time when the averaged error over $n$ cores reduces to $10^{-2}$.

We also test the AsySPA over a digraph where a core can send messages to cores that are $2^i+1$ steps away as in \cite{lian2018asynchronous}, where $i=0,1,...,\lfloor\log_2(n-1)\rfloor$. The results remain almost unchanged from Fig. \ref{fig14}. Note that nodes in this case have a  smaller number of in-neighbors and hence it is more scalable. We further compare the effect of the number of neighbors in Section \ref{sec6d}.

\begin{figure}[!t]
	\centering
	\includegraphics[width=0.75\linewidth]{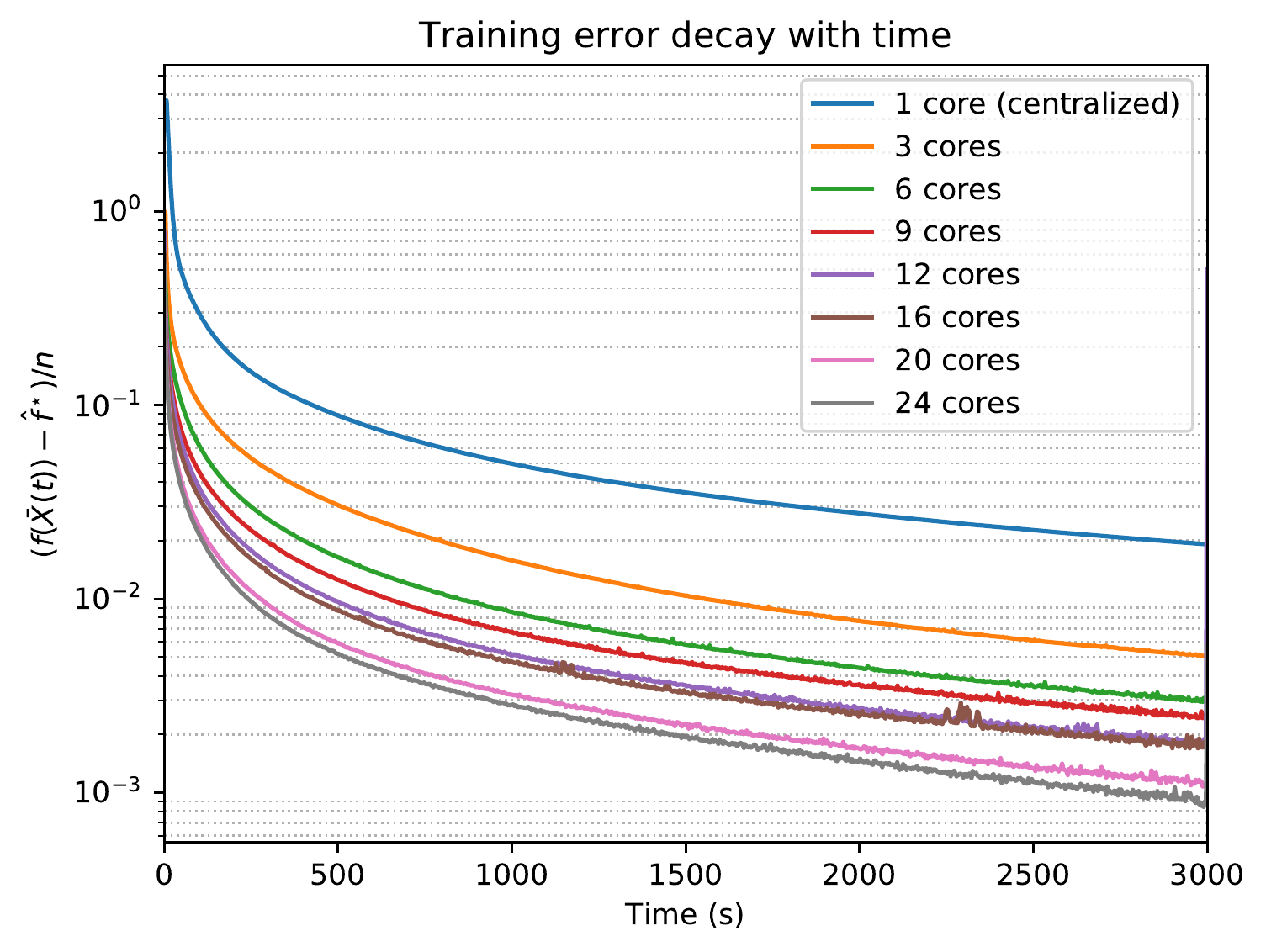}
	\caption{The averaged training errors of the AsySPA and the centralized gradient method versus the computing time.}
	\label{fig14}
\end{figure}

\begin{figure}[!t]
	\centering
	\includegraphics[width=0.75\linewidth]{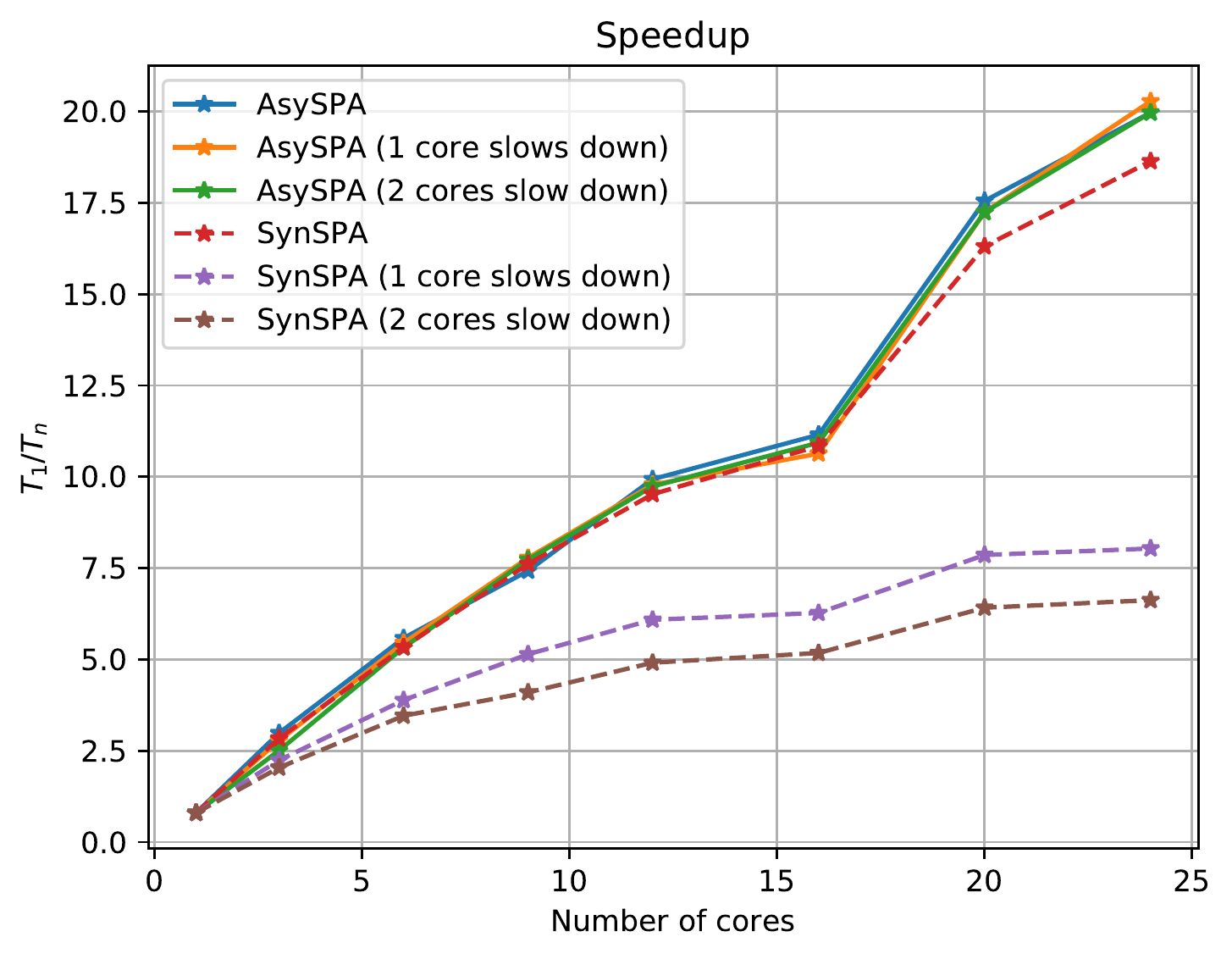}\\
	\caption{Speedup of the AsySPA and SynSPA, where $T_n$ is the time when the averaged error over $n$ cores first reduces to $10^{-2}$. Here $1$ core slows down means that we add an artificial waiting time $t_w$ after each update in one of the $n$ cores, where $t_w$ is exponentially distributed with mean $20$ms.}
	\label{fig15}
\end{figure}

\begin{figure*}[!t]
	\centering
	\subfloat[]{\label{fig7}\includegraphics[width=0.31\linewidth]{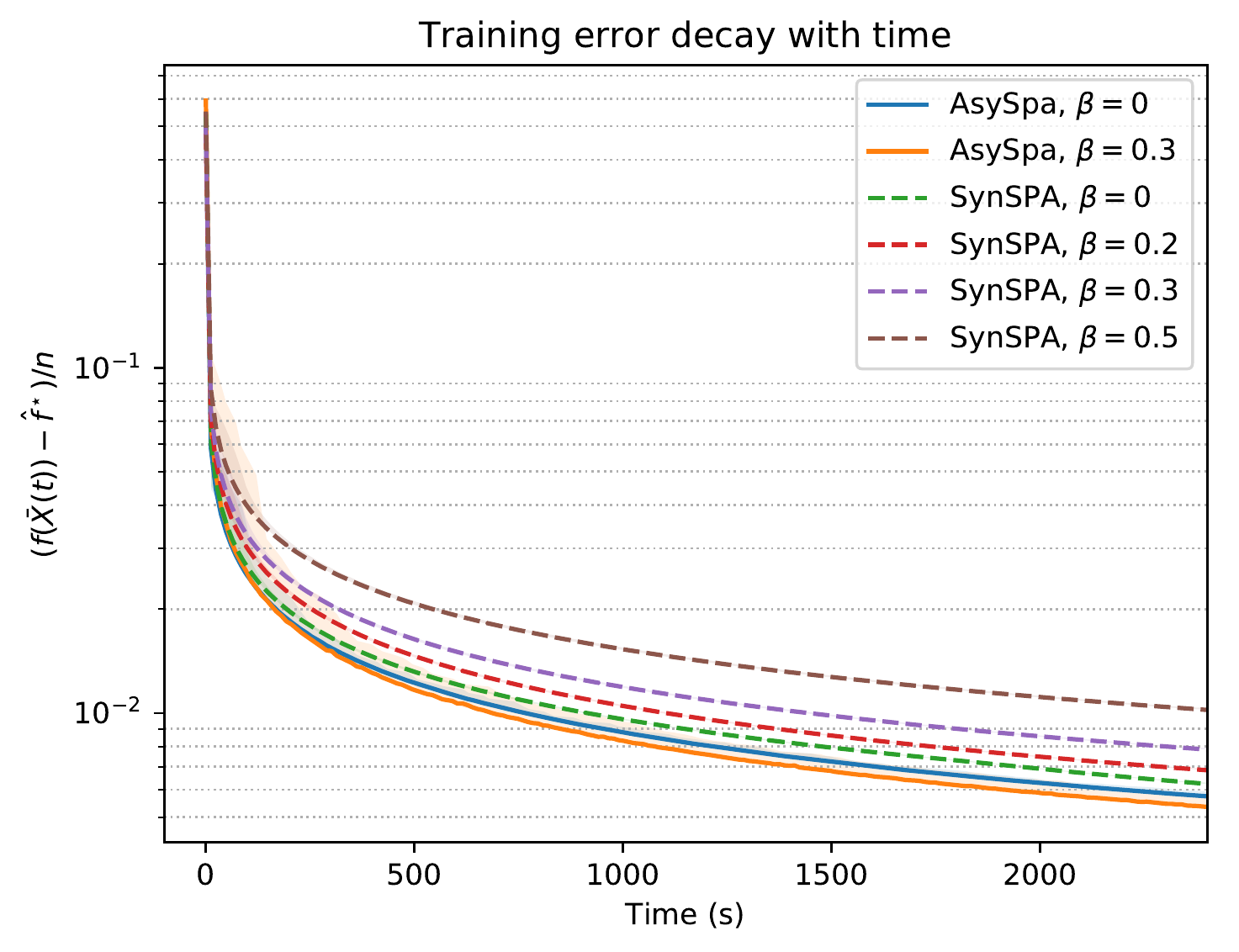}}
	\subfloat[]{\label{fig8}\includegraphics[width=0.31\linewidth]{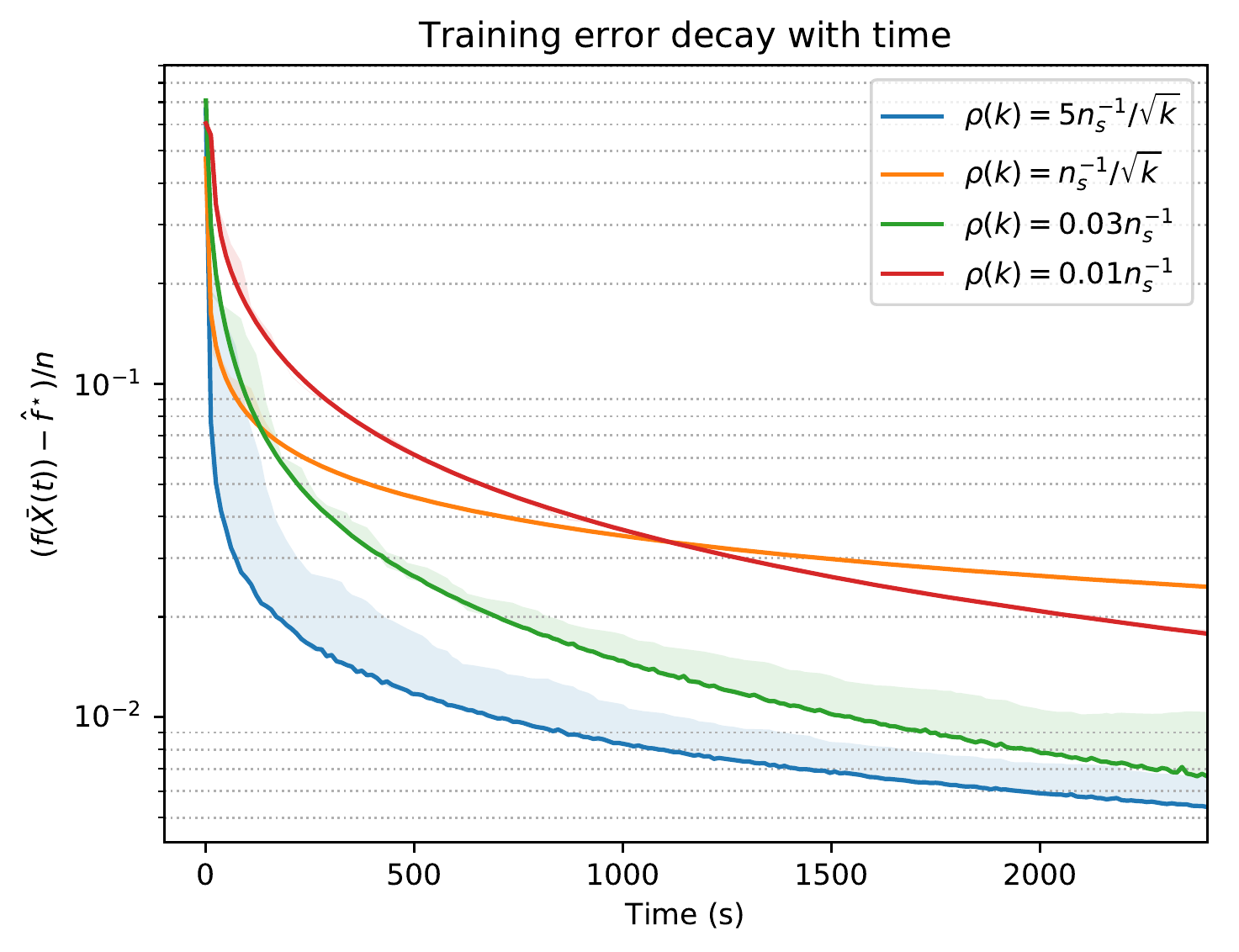}}
	\subfloat[]{\label{fig9}\includegraphics[width=0.31\linewidth]{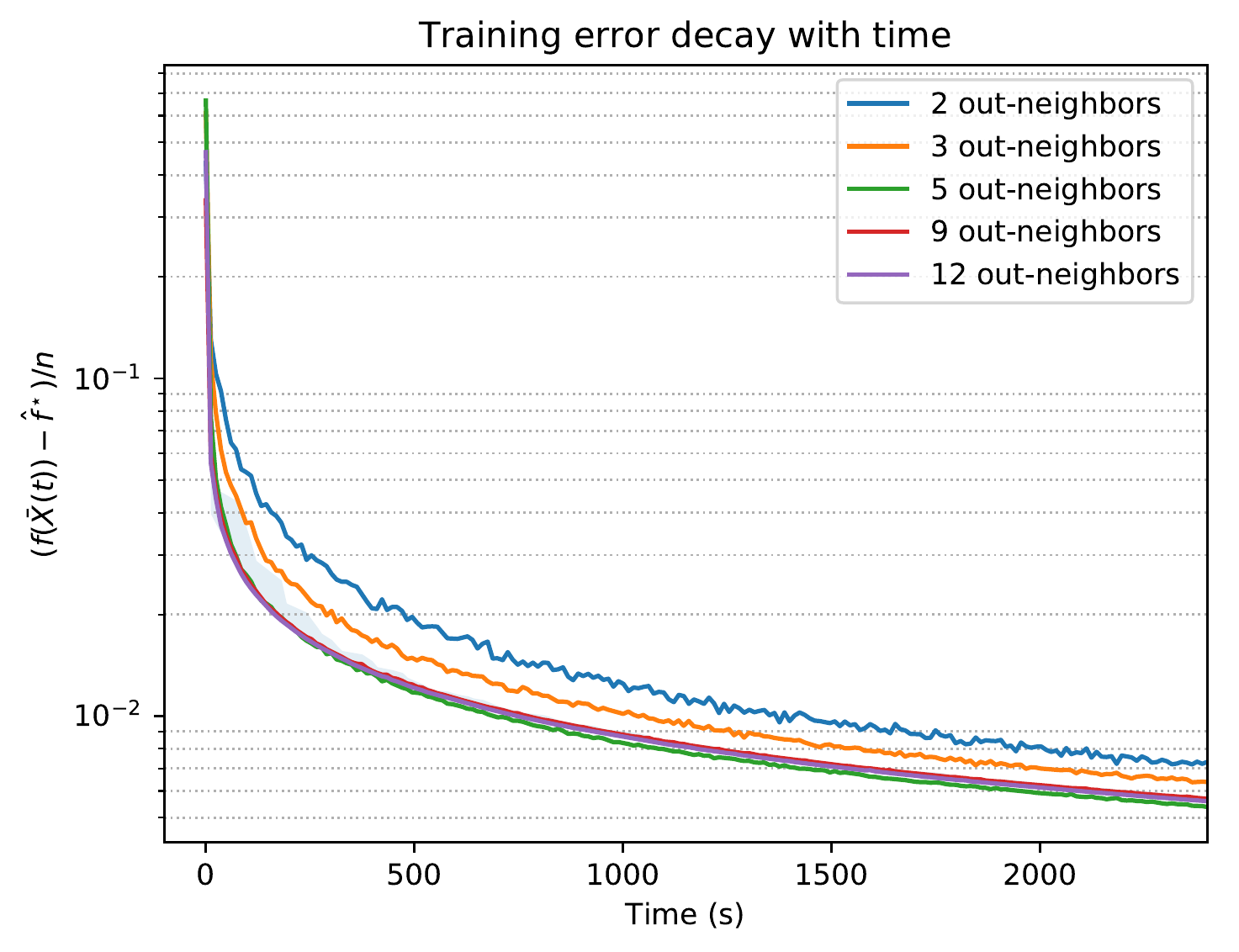}}
	\caption{(a) Convergence behaviors of the AsySPA and the SynSPA. (b) Convergence of the AsySPA under different  stepsizes. (c) Convergence of the AsySPA under different number of out-neighbors.}
	\label{fig3:exp1}
\end{figure*}

\subsection{Comparison with the SynSPA}\label{sec6b}
We also compare the AsySPA with the SynSPA in Algorithm \ref{alg_spa}, where the number of cores, communication digraph, and stepsize remain the same. To simulate the UUR, The $i$-th core in the AsySPA case updates every $i^\beta\times50$ms ($i=1,...,12$), while the $i$-th core in the synchronous case sends messages to its out-neighbors every $i^\beta\times50$ms, and perform an update after all cores have received messages from their in-neighbors. Fig. \subref*{fig7} shows the convergence behaviors of the AsySPA and the SynSPA. As expected, the AsySPA converges faster than the SynSPA in all cases.

To illustrate the better robustness of the AsySPA over the SynSPA, we consider the situation that some nodes slow down by adding an artificial waiting time $t_w$ after each update, where $t_w$ is an exponential distributed random variable with expectation 20ms. Fig. \ref{fig15} shows the speedup of the AsySPA and the SynSPA  in these experiments. Clearly, the AsySPA is almost not affected by the slowed down core(s), while the convergence rate of the SynSPA is severely reduced by the slowed down node(s).

Note that in the synchronous case the MPI acts like there is a global clock, which may be not available in practice. Moreover, the physical communication networks may have limited bandwidth and delays, which are worse than the one here inside a single computer. This makes the performance difference between the SynSPA and the AsySPA more significant.

\subsection{Comparison with the-state-of-the-art algorithm in \cite{assran2018asynchronous}}\label{sec6c}

We  compare the AsySPA with the asynchronous algorithm in \cite{assran2018asynchronous} over $12$ cores, the communication digraph of which remains the same as before. Each core uses the diminishing stepsize $\rho(k)=1/\sqrt{k}$. To simulate uneven updates of agents, we let the $i$-th core update once every $i^\beta\times 50$ms ($i=1,...,12$), and test the algorithms for $\beta = 0,0.3,0.6$. Note that a larger $\beta$ leads to a higher unevenness, and $\beta=0$ means all cores update at the same rate.

Fig. \ref{fig6} shows the averaged training error, where the shade area is the range that $f(X_i(t)),i=1,...,12$ lie within. Clearly, the algorithm in \cite{assran2018asynchronous} fails to converge to an exact optimal solution of problem \eqref{lr} except for the  case $\beta=0$, and the larger $\beta$, the larger gap from an optimal solution, which is consistent with \cite{assran2018asynchronous}. In contrast, the AsySPA converges to an optimal solution in all cases, and the unevenness only affects the maximum difference among agents' states, which however decreases to $0$.
\begin{figure}[!t]
	\centering
	\includegraphics[width=0.75\linewidth]{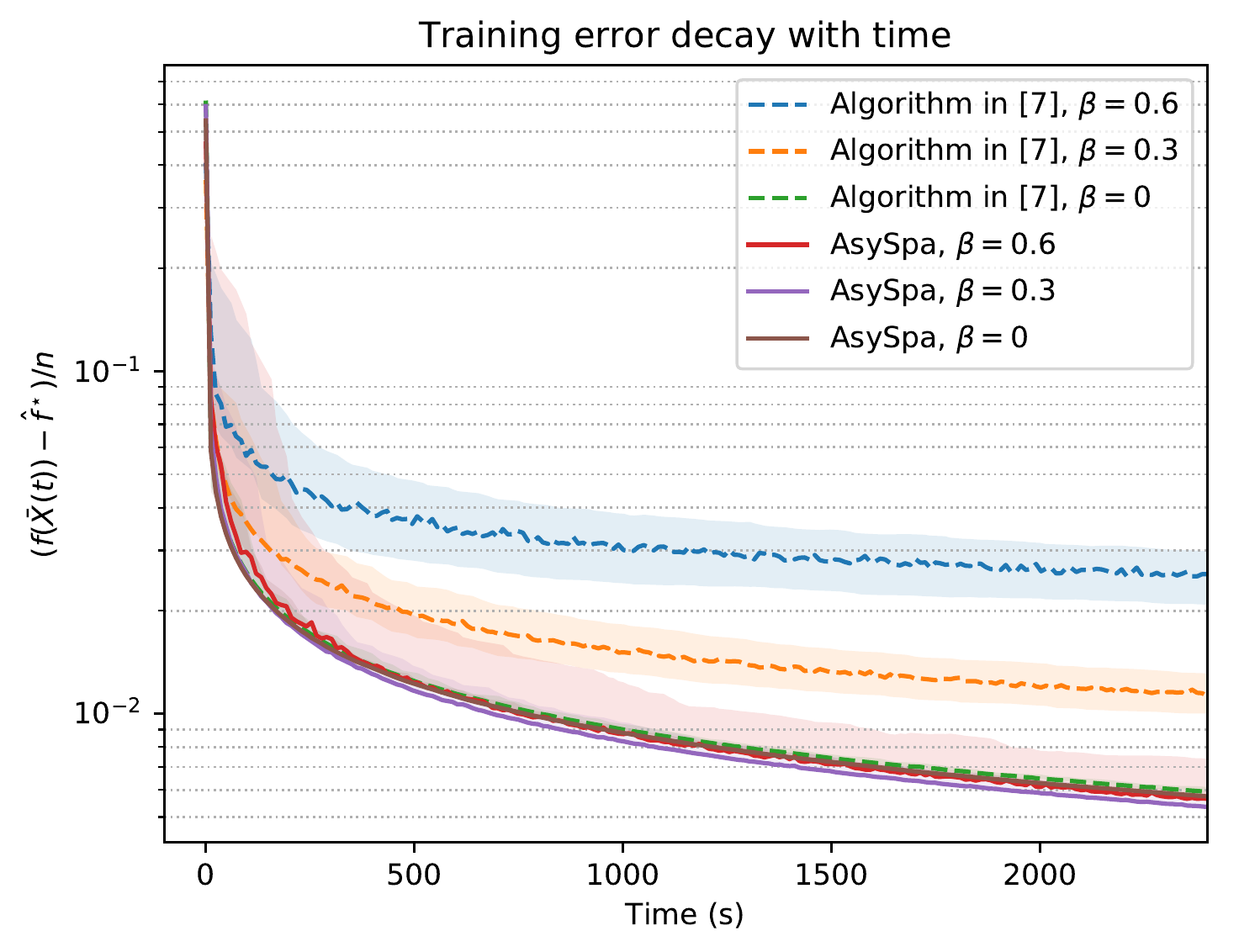}\\
	\caption{The average training errors of the AsySPA and the algorithm in \cite{assran2018asynchronous} versus computing time under different $\beta$.}
	\label{fig6}
\end{figure}

\subsection{Effect of stepsizes and number of out-neighbors}\label{sec6d}
We compare the performance of the AsySPA under different choices of stepsize. Fig. \subref*{fig8} depicts the training error under constant stepsize and diminishing stepsize. Clearly, a diminishing stepsize $\rho(k)=a/\sqrt{k}$ with larger $a$ converges faster at the expense of more oscillations (the shade area), and a constant stepsize cannot guarantee the consensus among agents. This phenomenon also appears in the standard subgradient method.

We then examine the effect of the number of out-neighbors on the performance of the AsySPA. We test the AsySPA under cases that each core sends messages to itself and its next $n_o=1,2,4,8,11$ cores (c.f. Fig. \ref{fig5}). The results are shown in Fig. \subref*{fig9}. It is interesting to notice that the algorithm converges faster by increasing the number of out-neighbors until it achieves some critical number (4 in this experiment). Then the convergence rate remains almost the same or even decreases a bit as the number of out-neighbors increases. A possible reason is that while more out-neighbors lead to less iterations to achieve consensus among agents, they increase the communication overhead.

\diff{
\subsection{Train a support vector machine}\label{sec6e}
\begin{figure}[!t]
	\centering
	\includegraphics[width=0.75\linewidth]{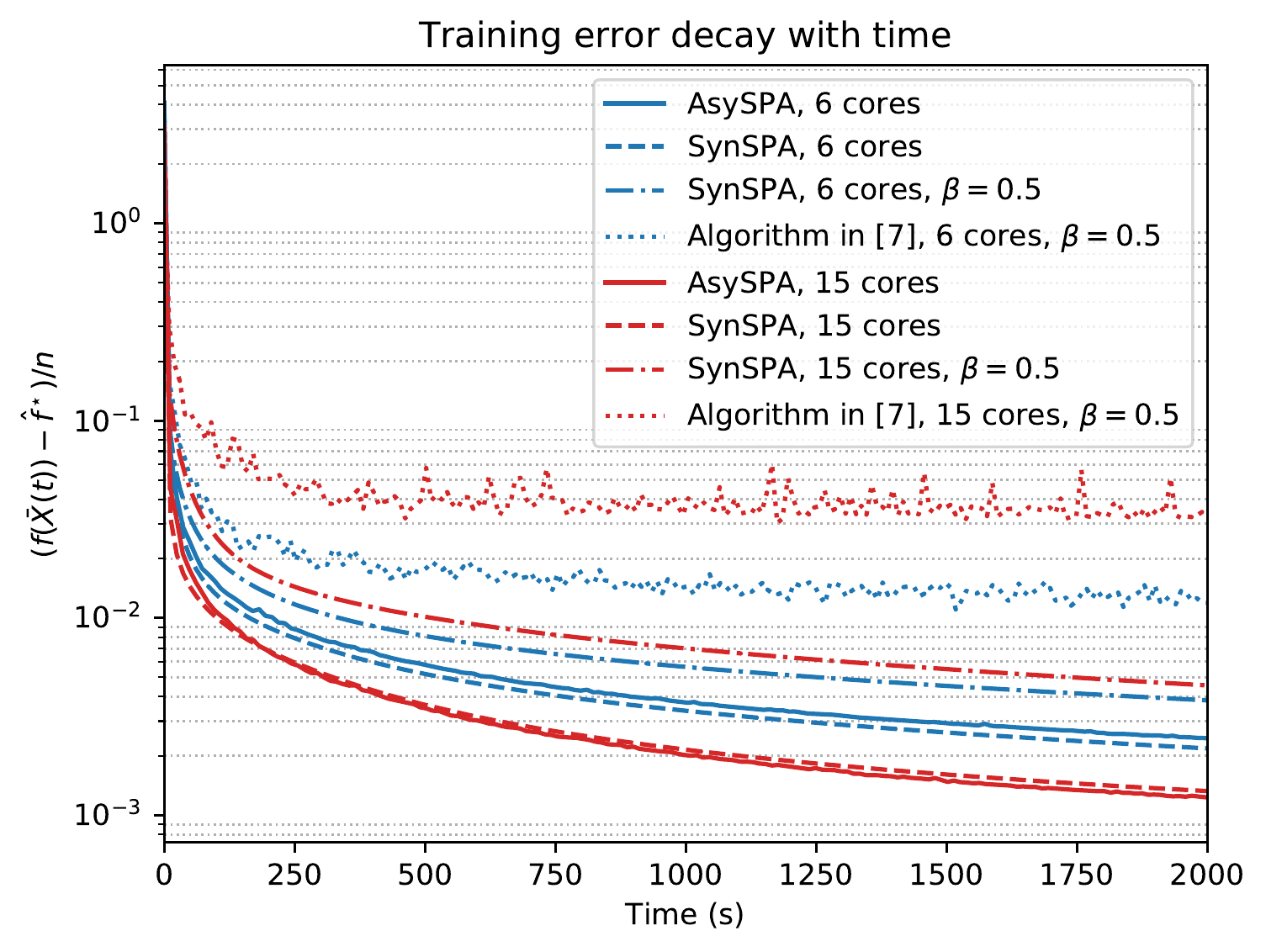}\\
	\caption{Comparison of the algorithms on a SVM training problem, where $\beta$ is defined in Section \ref{sec6c}.}
	\label{fig11}
\end{figure}

Since the subgradient methods are very suitable for minimizing non-smooth objective functions but the logistic regression cost function is smooth, we finally test the algorithm on a SVM training problem with hinge loss, which has the following non-smooth cost function,
\bee\label{svm}
	\minimize_{X}\ \sum_{i=1}^{n_s}\sum_{j\neq l^i}^{n_c}\max\left\{0,\vec x_j^\T\vec s^i-\vec x_{l^i}^\T\vec s^i+1\right\}+\frac{\gamma}{2}\|X\|_F^2
\ene
where $l^i$ is the integer label of the $i$-th instance. We repeat some of the experiments in previous subsections on this problem and the result is shown in Fig. \ref{fig11}. We can see the performance is very similar with that on the logistic regression problem.
}

\section{Conclusions}\label{sec7}

This paper has proposed the AsySPA to solve the DOP over digraphs with uneven update rates and bounded transmission delays. We also demonstrated its advantages over the existing distributed algorithms. Currently, we only consider the general case that the objective function in the DOP is convex. In the future, we shall design the accelerated AsySPA for the strongly convex objective function with Lipschitz continuous gradient.

\appendices
\section{Proof of Theorem \ref{prop2}}\label{appendix_a}
\diff{The proof can be roughly divided into 4 steps. We first show that \eqref{ism} is an $\epsilon(k)$-subgradient step in Step 1, and derive a basic equality for the distance between $\vec x(k)$ and $\vec x^\star$ in Step 2. Then, this distance is shown to be essentially decreasing in Step 3. Finally, we prove the convergence by following a similar argument in standard subgradient methods.}

	\diff{{\bf Step 1: The iterate in \eqref{ism} is an $\epsilon(k)$-subgradient step}}.\\
	 Let $\epsilon(k)=\max\{0,f_{s(k)}(\vec x(k))-f_{s(k)}(\vec x(k)+\varepsilon(k))+g_k^\T\varepsilon(k)\}$ and $g_k=\nabla f_{s(k)}(\vec{x}(k)+\varepsilon(k))$. Clearly, it follows from Assumption \ref{assum} that $\|g_k\|\leq c$ for all $k$ and
\bea
	0\leq\epsilon(k)  \leq \|\nabla f_{s(k)}\|\cdot\|\varepsilon(k)\|+\|g_k\|\cdot\|\varepsilon(k)\|\leq 2c\|\varepsilon(k)\|.
\ena
For all $\vec y\in\bR^n$, it holds that
\begin{align}\label{eq1_lem4}
	 & f_{s(k)}(\vec y)                                                                     \\
	 & \geq f_{s(k)}(\vec x(k)+\varepsilon(k))+g_k^\T(\vec y-\vec x(k)-\varepsilon(k))      \\
	 & \geq f_{s(k)}(\vec x(k))+g_k^\T(\vec y-\vec x(k))+f_{s(k)}(\vec x(k)+\varepsilon(k)) \\
	 & \quad-f_{s(k)}(\vec x(k))-g_k^\T\varepsilon(k)                                       \\
	 & \geq f_{s(k)}(\vec x(k))+g_k^\T(\vec y-\vec x(k))-\epsilon(k).
\end{align}

	\diff{{\bf Step 2: A basic equality for the distance between $\vec x(k)$ and $\vec x^\star$}}.\\
	 To simplify notations, we let $\sigma=\sigma_1+\sigma_2$, and use $\underline{r}(k)$ and $\bar{r}(k)$ to denote $r_{s(k)}(k)+1$ and $r_{s(k)}(k+1)$, respectively. It follows from Assumption \ref{assum5} that $\bar{r}(k+t-1)-\underline{r}(k)\leq t\sigma$ for all $t\in\bN$. Then, it follows from \eqref{ism} that
\begin{equation}\label{eq3_prop2}
	\begin{aligned}
		 & \|\vec{x}(k+\sigma)-\vec{x}^\star\|^2=\left\|\vec{x}(k)-\vec{x}^\star-\sum_{u=k}^{k+\sigma-1}\sum_{t=\underline{r}(u)}^{\bar{r}(u)}\rho(t)g_u\right\|^2 \\
		 & =\|\vec{x}(k)-\vec{x}^\star\|^2-2\left(\sum\nolimits_{u,t}\rho(t)g_u\right)^\T(\vec{x}(k)-\vec{x}^\star)                                                \\
		 & \quad+\left\|\sum\nolimits_{u,t}\rho(t)g_u\right\|^2                                                                                                    \\
		 & =\|\vec{x}(k)-\vec{x}^\star\|^2-2\left(\sum\nolimits_{u,t}\rho(t)g_u\right)^\T(\vec{x}(k)-\vec{x}^\star) +\widetilde{\rho}_1(k).
	\end{aligned}
\end{equation}
\diff{where we define the sum operator $\sum\nolimits_{u,t}$ as
$$
	\sum\nolimits_{u,t}:=\sum\nolimits_{u=k}^{k+\sigma-1}\sum\nolimits_{t=\underline{r}(u)}^{\bar{r}(u)}$$
to save space} and $\widetilde{\rho}_1(k)=\left\|\sum\nolimits_{u,t}\rho(t)g_u\right\|^2$. 

Moreover, we have that
\bea\label{eq2_lemma4}
	\widetilde{\rho}_1(k)&\leq c^2\left(\sigma\sum\nolimits_{t=\underline{r}(k)-\sigma}^{\bar{r}(k+\sigma-1)+\sigma}\rho(t)\right)^2\\
	&\leq c^2\sigma^3(\sigma+2)\sum\nolimits_{t=\underline{r}(k)-\sigma}^{\bar{r}(k+\sigma-1)+\sigma}\rho(t)^2
\ena
where the first inequality uses the relation $\|g_u\|\leq c$ for all $u\in\bN$,  $\underline{r}(u)\geq \underline{r}(k)-\sigma$ for all $u\geq k$ and $\bar r(u)\leq\bar r(k)+\sigma$ for all $u\leq k$ from Assumption \ref{assum5}. The second inequality follows from the Cauchy-Schwarz inequality. Hence, we obtain that
\bea\label{eq3_lemma4}
\sum_{k=1}^\infty\widetilde{\rho}_1(k)&\leq \sigma^4(\sigma+2)c^2\sum_{k=0}^\infty\sum_{t=\underline{r}(k\sigma+1)}^{\bar{r}((k+1)\sigma)}\rho(t)^2\\
&\leq 2\sigma^5c^2\sum_{t=1}^{\infty}\rho(t)^2<\infty.
\ena

\diff{{\bf Step 3: The distance between $\vec x(k)$ and $\vec x^\star$ is essentially decreasing}}. 

 \diff{Firstly}, we consider the second term of the right-hand-side of \eqref{eq3_prop2}, and obtain that
\bea\label{eq10_prop2}
	 & \left(\sum\nolimits_{u,t}\rho(t)g_u\right)^\T(\vec{x}(k)-\vec{x}^\star)                                          \\
	 & =\sum\nolimits_{u,t}\rho(t)g_u^\T(\vec x(u)-\vec x^\star+\vec{x}(k)-\vec{x}(u))                                  \\
	 & =\sum\nolimits_{u,t}\rho(t)g_u^\T(\vec x(u)-\vec x^\star) +\sum\nolimits_{u,t}\rho(t)g_u^\T(\vec x(k)-\vec x(u)) \\
	 & \geq\sum\nolimits_{u,t}\rho(t)(f_{s(u)}(\vec{x}(u))-f_{s(u)}(\vec x^\star)-\varepsilon(u))                            \\
	 & \quad-\sum\nolimits_{u,t}\rho(t)\|g_u\|\cdot\|\vec x(k)-\vec x(u)\|                                              \\
	 & =\sum\nolimits_{u,t}\rho(t)(f_{s(u)}(\vec{x}(u))-f_{s(u)}(\vec x^\star)-\epsilon(u))-\widetilde{\rho}_2(k).
\ena
where the first equality is from \eqref{eq1_lem4} and $$\widetilde{\rho}_2(k)=\sum\nolimits_{u,t}\rho(t)\|g_u\|\cdot\|\vec x(k)-\vec x(u)\|.$$
Note that $
	\|\vec x(k)-\vec x(u)\|\leq c\|\sum\nolimits_{u,t}\rho(t)\|,\ \forall u\leq k+b-1$, which implies that
	$\widetilde{\rho}_2(k)\leq c^2\left(\sum\nolimits_{u,t}\rho(t)\right)^2.$ Then, it follows from \eqref{eq2_lemma4} and \eqref{eq3_lemma4} that $\sum_{k=1}^\infty\widetilde{\rho}_2(k)<\infty$.

\diff{Secondly}, we evaluate the first term of the right-hand-side of \eqref{eq10_prop2}, leading to that
\bea\label{eq11_prop2}
	 & \sum\nolimits_{u,t}\rho(t)(f_{s(u)}(\vec{x}(u))-f_{s(u)}(\vec x^\star)-\epsilon(u))                                 \\
	 & =\sum\nolimits_{u,t}\rho(t)(f_{s(u)}(\vec{x}(k))-f_{s(u)}(\vec x^\star))-\sum\nolimits_{u,t}\rho(t)\epsilon(u)      \\
	 & \quad+\sum\nolimits_{u,t}\rho(t)(f_{s(u)}(\vec{x}(u))-f_{s(u)}(\vec{x}(k)))                                    \\
	 & \geq\sum\nolimits_{u,t}\rho(t)(f_{s(u)}(\vec{x}(k))-f_{s(u)}(\vec x^\star))-\sum\nolimits_{u,t}\rho(t)\epsilon(u)   \\
	 & \quad-\sum\nolimits_{u,t}\rho(t)\|\nabla f_{s(u)}(\vec{x}(k))\|\cdot\|\vec{x}(u)-\vec{x}(k)\|                  \\
	 & =\sum\nolimits_{u,t}\rho(t)(f_{s(u)}(\vec{x}(k))-f_{s(u)}(\vec x^\star))-\widetilde \rho_3(k)                       \\
	 & \quad-\sum\nolimits_{u,t}\rho(t)\epsilon(u)                                                                    \\
	 & =\sum\nolimits_{u,t}\rho(t)(f_{s(u)}(\vec{x}(k))-f_{s(u)}(\vec x^\star))-\widetilde \rho_3(k)-\widetilde \rho_4(k)
\ena
where $\widetilde \rho_3(k)=\sum\nolimits_{u,t}\rho(t)\|\nabla f_{s(u)}(\vec{x}(k))\|\cdot\|\vec{x}(u)-\vec{x}(k)\|$ and $\sum_{k=1}^\infty\widetilde{\rho}_3(k)\leq \infty$. In addition, $\widetilde \rho_4(k)=\sum\nolimits_{u,t}\rho(t)\epsilon(u)>0$, and
$
	\sum_{k=1}^\infty\widetilde{\rho}_4(k)<\sum_{k=1}^\infty\sum_{t=\underline{r}(k)-\sigma}^{\bar{r}(k+\sigma-1)+\sigma}\rho(t)\epsilon(k)<b^3\sum_{k=1}^\infty\rho(k)\epsilon(k)<\infty.
$

\diff{Thirdly}, we examine the first term of the right-hand-side of \eqref{eq11_prop2}, i.e., 
\begin{align}\label{eq13_prop2}
	 & \sum\nolimits_{u,t}\rho(t)(f_{s(u)}(\vec{x}(k))-f_{s(u)}(\vec x^\star)) \\ &=\sum\nolimits_{i=1}^{n}\sum\nolimits_{t=r_i(k)+1}^{r_i(k+\sigma)}\rho(t) (f_{i}(\vec{x}(k))-f_i(\vec x^\star))
\end{align}
where the equality holds by expanding the left-hand-side with reorganization, and then uses the fact that $\cV\subseteq\{s(k+1),...,s(k+\sigma)\}$ for any $k$ from Assumption \ref{assum5}.

Let $\widetilde r(k)=\max_jr_j(k)+1$. It follows from Assumption \ref{assum5} that $\widetilde r(k+1)\geq\widetilde r(k)$ and $\widetilde r(k+\sigma)\geq\widetilde r(k)+1$. Jointly with \eqref{eq13_prop2}, we obtain
\bea\label{eq12_prop2}
	 & \sum\nolimits_{u,t}\rho(t)(f_{s(u)}(\vec{x}(k))-f_{s(u)}(\vec x^\star))                                                                                                      \\
	 & =\sum\nolimits_{i=1}^{n}\big(\sum\nolimits_{t=r_i(k)+1}^{\widetilde r(k)}+\sum\nolimits_{t=\widetilde r(k)+1}^{\widetilde r(k+\sigma)}-\sum\nolimits_{\substack{t=1+r_i
				(k+\sigma)}}^{\widetilde r(k+\sigma)}\big)                                                                                                                                 \\
	 & \quad\times \rho(t) (f_{i}(\vec{x}(k))-f_i(\vec x^\star))                                                                                                                    \\
	 & =\sum\nolimits_{t=\widetilde r(k)+1}^{\widetilde r(k+\sigma)}\rho(t)(f(\vec{x}(k))-f(\vec x^\star))                                                                          \\
	 & \quad+\sum\nolimits_{i=1}^{n}\sum\nolimits_{t=r_i(k)+1}^{\widetilde r(k)}\rho(t) (f_{i}(\vec{x}(k))-f_i(\vec x^\star))                                                       \\
	 & \quad - \sum\nolimits_{i=1}^{n}\sum\nolimits_{\substack{t=r_i
			(k+\sigma)+1}}^{\widetilde r(k+\sigma)}\rho(t) \left(f_{i}(\vec{x}(k+\sigma))-f_i(\vec x^\star)\right)                                                                          \\
	 & \quad - \sum\nolimits_{i=1}^{n}\sum\nolimits_{\substack{t=r_i
			(k+\sigma)+1}}^{\widetilde r(k+\sigma)}\rho(t) \left(f_i(\vec{x}(k))-f_{i}(\vec{x}(k+\sigma))\right)                                                                       \\
	 & =\hspace{-0.2cm}\sum_{t=\widetilde r(k)+1}^{\widetilde r(k+\sigma)}\rho(t)(f(\vec{x}(k))-f(\vec x^\star))+\delta(k)-\delta(k+\sigma)-\widetilde \rho_5(k)
\ena
where we use the fact that $r_i(k)\geq \widetilde r(k)-\sigma$ for all $k$ and $i$ from Assumption \ref{assum5}, and
\bea\label{eq4_prop1}
	\delta(k)   & = \sum\nolimits_{i=1}^{n}\sum\nolimits_{t=r_i(k)+1}^{\widetilde r(k)}\rho(t) (f_{i}(\vec{x}(k))-f_i(\vec x^\star)),                                                        \\
	|\delta(k)| & \leq \sum\nolimits_{t=\widetilde r(k)-\sigma}^{\widetilde r(k)}\rho(t)n\max_i |f_i(\vec{x}(k))-f_i(\vec x^\star)|                                                          \\
	            & \leq\sum\nolimits_{t=\widetilde r(k)-\sigma}^{\widetilde r(k)}\rho(t)nc\|\vec x(k)-\vec x^\star\|                                                                     \\
	            & \leq \sum_{t=\widetilde r(k)-\sigma}^{\widetilde r(k)}\rho(t)nc\big(\|\vec x(1)-\vec x^\star\|+c\sum_{u=1}^{k-1}\sum_{t=\underline{r}(u)}^{\bar{r}(u)}\rho(t)\big) \\
	            & \leq \sum_{t=\widetilde r(k)-\sigma}^{\widetilde r(k)}\rho(t)nc\big(\|\vec x(1)-\vec x^\star\|+c\sum_{t=1}^{\widetilde r(k)}\rho(t)\big)
\ena
which shows that $\delta(k)$ is bounded. Moreover, 
\begin{align}
	\widetilde \rho_5(k)   & =\sum_{i=1}^{n}\sum_{t=r_i(k+b)+1}^{\widetilde r(k+\sigma)}\rho(t) (f_{i}(\vec{x}(k))-f_i(\vec{x}(k+\sigma))), \\
	|\widetilde \rho_5(k)| & \leq nc\sum\nolimits_{t=\widetilde r(k+\sigma)-\sigma}^{\widetilde r(k+\sigma)}\rho(t) \|\vec{x}(k+\sigma)-\vec{x}(k)\| \\
	                       & \leq nc^2\sum\nolimits_{t=\widetilde r(k+\sigma)-\sigma}^{\widetilde r(k+\sigma)}\rho(t)\sum\nolimits_{u,t}\rho(t)
\end{align}
which implies that $\sum_{k=1}^\infty |\widetilde \rho_5(k)|<\infty$, i.e., $\{\widetilde \rho_5(k)\}$ converges absolutely, and hence $\lim_{k\ra\infty}\sum_{t=1}^k \widetilde \rho_5(t)$ exists and is finite.

\diff{Finally},  combining \eqref{eq10_prop2}, \eqref{eq11_prop2} and \eqref{eq12_prop2} yields the key inequality
\begin{align}\label{eq5_prop2}
	 & \left(\sum\nolimits_{u,t}\rho(t)g_u\right)^\T(\vec{x}(k)-\vec{x}^\star)                                                              \\
	 & \geq\sum\nolimits_{t=\widetilde r(k)+1}^{\widetilde r(k+\sigma)}\rho(t)(f(\vec{x}(k))-f(\vec x^\star))-\sum_{i=2}^{5}\widetilde \rho_i(k) \\
	 & \quad +\delta(k)-\delta(k+\sigma)
\end{align}
where $\sum\nolimits_{t=\widetilde r(k)+1}^{\widetilde r(k+\sigma)}\rho(t)(f(\vec{x}(k))-f(\vec x^\star))$ is positive.  In view of \eqref{eq3_prop2}, the distance between $\vec x(k)$ and $\vec x^\star$ is essentially decreasing in the sense of neglecting all trivial terms. 

\diff{{\bf Step 4: Convergence of the iterate in \eqref{ism}}}.\\
 Eq. \eqref{eq5_prop2} together with \eqref{eq3_prop2} implies that
\begin{align}\label{eq6_prop2}
	 & \|\vec{x}(k+\sigma)-\vec{x}^\star\|^2 \leq\|\vec{x}(k)-\vec{x}^\star\|^2+ \widetilde \rho_6(k)                             \\
	 & -2\sum_{t=\widetilde r(k)+1}^{\widetilde r(k+\sigma)}\rho(t)(f(\vec x(k))-f(\vec x^\star))  -2\delta(k)+2\delta(k+\sigma).
\end{align}
where $\widetilde \rho_6(k)=\widetilde \rho_1(k)+2\sum_{i=2}^{5}\widetilde \rho_i(k)$. Apparently, $\lim_{k\ra\infty}\sum_{t=1}^k \widetilde \rho_6(t)$ exists and is finite.

Let $\vec y(k)=\vec x(k\sigma+1)$, it follows from \eqref{eq6_prop2} that
\bea\label{eq2_prop2}
	&\|\vec y(k+1)-\vec x^\star\|^2\leq\|\vec y(k)-\vec x^\star\|^2\\
	&-\sum\nolimits_{t=\widetilde r(k\sigma+1)+1}^{\widetilde r((k+1)\sigma+1)}2\rho(t)(f(\vec y(k))-f(\vec x^\star))+\eta(k)
\ena
where $
	\eta(k)=\widetilde \rho_6(kb+1)-2\delta(k\sigma+1)+2\delta((k+1)\sigma+1).
$
Since $\{\delta(k)\}$ is bounded, we have for all $k\in\bN$ that
\bea\label{eq3_prop1}
	 & \sum\nolimits_{t=0}^{k}\eta(t)                                                                             \\
	 & = \sum_{t=0}^{k}\left(2\delta(t\sigma+1) -2\delta((t+1)\sigma+1) + \widetilde \rho_6((t+1)\sigma+1)\right) \\
	 & =2\delta(1)-2\delta((k+1)\sigma+1)+\sum\nolimits_{t=1}^{k}\widetilde \rho_6(t\sigma)                       \\
	 & \leq2\delta(1)-2\delta((k+1)\sigma+1)+\sum\nolimits_{t=1}^{k}|\widetilde \rho_6(t\sigma)|<\infty.
\ena

\diff{Note that $\eta(k)$ may be negative for some $k$, and thus the convergence of $\{\vec y(k)\}$ and $\{\vec x(k)\}$ cannot be obtained directly from the supermartingale convergence theorem (SCT, \cite[Proposition A.4.4]{bertsekas2015convex}). Nonetheless, we can readily prove that $\{\|\vec y(k)-\vec x^\star\|\}$ is convergent and the convergence of $\{\vec y(k)\}$ to $\vec x^\star$ by using similar arguments of SCT. Together with the fact that $\lim_{k\rightarrow\infty}\rho(k)=0$, it implies the convergence of $\{\vec x(k)\}$ to $\vec x^\star$.}\qed

\section{Proof of Theorem \ref{prop3}}\label{appendix_b}
\subsection{Proof of Part (a):}

For convenience, we only prove the case for $\alpha\in[0.5,1)$, while the proof extends to $\alpha=1$ in a straightforward manner.

The relation \eqref{eq2_prop2} implies that
\bea\label{eq1_prop3}
	&\sum\nolimits_{t=\widetilde r(k\sigma+1)+1}^{\widetilde r((k+1)\sigma+1)}2\rho(t)(f(\vec y(k))-f(\vec x^\star))\\
	&\quad\leq\|\vec y(k)-\vec x^\star\|^2-\|\vec y(k+1)-\vec x^\star\|^2+\eta(k)
\ena
Since $\widetilde r((k+1)\sigma+1)\geq\widetilde r(k\sigma+1)+1$, and hence $\sum_{t=\widetilde r(k\sigma+1)+1}^{\widetilde r((k+1)\sigma+1)}\rho(t)\geq\rho(\widetilde r(k\sigma+1)+1)\geq\rho(k+1)$, we have
\bea\label{eq2_prop3}
	&2\rho(k+1)(f(\vec y(k))-f(\vec x^\star))\\
	&\quad\leq\|\vec y(k)-\vec x^\star\|^2-\|\vec y(k+1)-\vec x^\star\|^2+\eta(k)
\ena
Adding \eqref{eq2_prop3} over $k$ implies that
\bee\label{eq3_prop3}
	2\sum_{t=1}^{k}\rho(t)(f(\vec y(k))-f(\vec x^\star))\leq\|\vec y(1)-\vec x^\star\|^2+\sum_{t=1}^{k}\eta(t).
\ene
It follows from \eqref{eq3_prop1}, the definitions of $\widetilde \rho_i(k),i=1,...,6$, and \eqref{eq4_prop1} that
\bea\label{eq4_prop2}
	&\sum\nolimits_{t=0}^{k}\eta(t)=2\delta(1)-2\delta((k+1)\sigma+1)+\sum\nolimits_{t=1}^{k}|\widetilde \rho_6(t\sigma)|\\
	&\leq 2\sigma^3\sum\nolimits_{t=1}^k\rho(k)\epsilon(k)+10\sigma^5c^2\sum\nolimits_{t=1}^{k}\rho(t)^2\\
	&\quad+2nc(\sigma\rho(k)d(\vec x(1))+\sigma^2\sum\nolimits_{t=1}^{k}\rho(t)^2)\\
	&\leq 2\sigma^3\sum_{t=1}^k\rho(k)\epsilon(k)+12\sigma^5c^2(\sum_{t=1}^{k}\rho(t)^2+\rho(k)d(\vec x(1)))\\
	&\leq \frac{8\sigma^3cc_{\varepsilon}\alpha}{2\alpha-1}+12\sigma^5c^2(\frac{2\alpha}{2\alpha-1}+k^{-\alpha}d(\vec x(1))^2)\\
	&\leq \frac{8\sigma^3cc_{\varepsilon}}{2\alpha-1}+12\sigma^5c^2(\frac{2}{2\alpha-1}+d(\vec x(1))^2)\\
\ena
where the second to last inequality uses $\rho(k)=k^{-\alpha},\epsilon(k)\leq 2cc_{\varepsilon}\rho(k)$ and the relation $
	\int_1^{k}\frac{1}{x^\alpha}dx<\sum_{t=1}^{k} \frac{1}{t^\alpha}<\int_1^{k}\frac{1}{x^\alpha}dx+1$,
and the last inequality follows from that $\alpha\leq1$ and $k^{-\alpha}\leq 1$.

Eq. \eqref{eq4_prop2} combined with \eqref{eq3_prop3} and the fact that $\sum_{t=1}^{k}\rho(t)>\int_1^{k}\frac{1}{x^{\alpha}}dx=s(k)$ yields
\bea
	&\min_{1\leq t\leq k}f(\vec y(t))-f(\vec x^\star)\leq\frac{\|\vec y(1)-\vec x^\star\|^2+\sum_{t=1}^{k}\eta(t)}{s(k)}\\
	&\leq\frac{4\sigma^3c}{s(k)}\left(\frac{2(c_{\varepsilon}+3\sigma^2c)}{2\alpha-1}+3\sigma^2cd(\vec x(1))^2\right).
\ena
The desired result follows by noticing that $\vec y(k)=\vec x(k\sigma+1)$.

\subsection{Proof of Part (b):}

Define $z(k)=d(\vec y(k))^2-2\delta(k\sigma+1)$ and we can rewrite \eqref{eq2_prop2} as
\bea\label{eq4_prop3}
	&z(k+1)\\
	&= \|\vec y(k+1)-\vec x^\star_{k+1}\|^2-2\delta(k\sigma+1)\\
	&\leq\|\vec y(k+1)-\vec x^\star_k\|^2-2\delta(k\sigma+1)\\
	&\leq z(k)-\sum_{t=\widetilde r(k\sigma+1)+1}^{\widetilde r((k+1)\sigma+1)}2\rho(t)(f(\vec y(k))-f^\star)+\widetilde \rho_6(k\sigma+1)\\
	&\leq z(k)-c_h\sum_{t=\widetilde r(k\sigma+1)+1}^{\widetilde r((k+1)\sigma+1)}2\rho(t)d(\vec y^k)^{1/\theta}+\widetilde \rho_6(k\sigma+1)\\
	&\leq z(k)-2c_h\rho(k\sigma)d(\vec y(k))^{1/\theta}+\widetilde \rho_6(k\sigma+1)
\ena
where $\vec x_k^\star=\argmin_{\vec x^\star\in\cX^\star}\|\vec y(k)-\vec x^\star\|$ and hence the first inequality follows; the second inequality is from \eqref{eq2_prop2}, the third inequality uses the H\"{o}lder error bound, and the last equality follows from $\widetilde r(k\sigma+1)+1\geq k\sigma$ and thus $\sum_{t=\widetilde r(k\sigma+1)+1}^{\widetilde r((k+1)\sigma+1)}\rho(t)\geq\rho(k\sigma)$.

The goal is to derive convergence rates for a sequence $\{z(k)\}$ satisfying \eqref{eq4_prop3}. Note that it follows from \eqref{eq4_prop1} that
\bea\label{eq6_prop3}
	&d(\vec y(k))^2-2\sigma^2c\rho(k)d(\vec y(k))\leq z(k)\\
	&\leq d(\vec y(k))^2+2\sigma^2c\rho(k)d(\vec y(k)).
\ena
Let
\bee
	\cK_I=\{k\in\bN|d(\vec y^k)\leq 2\sigma^2c\rho(k)\}
\ene
and its complement
\bee
	\cK_I^c=\{k\in\bN|d(\vec y^k)> 2\sigma^2c\rho(k)\}.
\ene

It holds that for any $k\in\cK_I$,
\bee
	d(\vec y(k))^2\leq 4\sigma^4c^2\rho(k)^2\leq 4\sigma^4c^2k^{-2\alpha}\leq 4\sigma^4c^2k^{-2\theta\alpha}.
\ene

\diff{If $\cK_I^c$ is empty, then the proof completes. Otherwise, let $k\in\cK_I^c$ and $k_0,k_1$ be two consecutive elements in $\cK_I$ ($k_0\leq k_1$) such that $k_0\leq k\leq k_1$}. We have from \eqref{eq6_prop3} that
\bee
	d(\vec y(k))^2\geq \frac{1}{2}d(\vec y(k))^2+\sigma^2c\rho(k)d(\vec y(k)))\geq \frac{1}{2}\vec z(k)\geq 0.
\ene
Then, it follows from \eqref{eq4_prop3} that
\bea\label{eq5_prop3}
	z(k+1)&\leq z(k)-c_h\rho(k\sigma)z(k)^{1/2\theta}+\widetilde \rho_6(k\sigma+1)\\
	&\leq z(k)-c_h\rho(k\sigma)z(k)^{1/2\theta}+\sigma^4c^2 \rho(k\sigma)^2
\ena
Invoking Theorem 6 in \cite{johnstone2017faster}, we obtain that there exists a constant $c_b^{(1)}$ such that
$
	z(k)\leq c_b^{(1)}k^{-2\theta\alpha}.
$

On the other hand, \eqref{eq4_prop1} implies that
\begin{align}\label{eq8_prop2}
	 & \delta(k\sigma+1)                                                                                                               \\
	 & = \sum_{i=1}^{n}\sum_{t=r_i(k\sigma+1)+1}^{\widetilde r(k\sigma+1)}\rho(k\sigma+1) (f_{i}(\vec{x}(k\sigma+1))-f_i(\vec x^\star))     \\
	 & \leq  \sum_{i=1}^{n}\sum_{t=r_i(k\sigma+1)+1}^{\widetilde r(k\sigma+1)}\rho(t)ncd(\vec y^k)\leq c_b^{(2)}d(\vec y^k)k^{-\alpha}
\end{align}
where $c_b^{(2)}$ is a positive constant. Thus, we have
\bee
	d(\vec y(k))^2=z(k)+2\delta(k\sigma+1)\leq c_b^{(1)}k^{-2\theta\alpha}+c_b^{(2)}d(\vec y^k)k^{-\alpha}
\ene
which implies $d(\vec y(k))^2\leq c_b k^{-2\theta\alpha}$ for some $c_b>0$.\qed

\bibliographystyle{IEEEtran}
\bibliography{mybibf}         

\end{document}